\DeclarePairedDelimiter\bracks{[}{]}
\DeclarePairedDelimiter\abs{\lvert}{\rvert}
\DeclarePairedDelimiter\parens{(}{)}
\newtheorem{theorem}{Theorem}[section]
\newtheorem{definition}[theorem]{Definition}
\newtheorem{claim}[theorem]{Claim}
\newtheorem{lemma}[theorem]{Lemma}
\newtheorem{corollary}[theorem]{Corollary}
\newcommand{\qedsymb}{\hfill{\rule{2mm}{2mm}}}
\renewenvironment{proof}[1][]{\begin{trivlist}
\item[\hspace{\labelsep}{\bf\noindent Proof#1:\/}] }{\qedsymb\end{trivlist}}
\def\calG{{\cal G}}
\def\calD{{\cal D}}
\def\calH{{\cal H}}
\def\C{\mathbb{C}}
\newcommand\expectation{\mathop{{\mathbb{E}}}}
\newcommand{\Real}{\mathsf{Re}}
\newcommand{\Image}{\mathsf{Im}}
\newcommand{\eps}{\epsilon}
\renewcommand{\epsilon}{\varepsilon}
\begin{document}

\title{{\bf The Restricted Isometry Property of Subsampled Fourier Matrices}}

\author{
Ishay Haviv\thanks{School of Computer Science, The Academic College of Tel Aviv-Yaffo, Tel Aviv 61083, Israel.}
\and Oded Regev\thanks{Courant Institute of Mathematical Sciences, New York University. Supported by the Simons Collaboration on Algorithms and Geometry and by the National Science Foundation (NSF) under Grant No.~CCF-1320188. Any opinions, findings, and conclusions or recommendations expressed in this material are those of the authors and do not necessarily reflect the views of the NSF.}
}

\date{}

\maketitle

\begin{abstract}
A matrix $A \in \C^{q \times N}$ satisfies the {\em restricted isometry property} of order $k$ with constant $\eps$ if it preserves the $\ell_2$ norm of all $k$-sparse vectors up to a factor of $1\pm \eps$.
We prove that a matrix $A$ obtained by randomly
sampling $q = O(k \cdot \log^2 k \cdot \log N)$ rows
from an $N \times N$ Fourier matrix
satisfies the restricted isometry property of order $k$ with a fixed $\eps$ with high probability.
This improves on Rudelson and Vershynin (Comm. Pure Appl. Math.,~2008),
its subsequent improvements, and Bourgain (GAFA Seminar Notes,~2014).
\end{abstract}


\section{Introduction}

A matrix $A \in \C^{q \times N}$ satisfies the {\em restricted isometry property} of order $k$ with constant $\eps>0$ if for every $k$-sparse vector $x \in \C^N$ (i.e., a vector with at most $k$ nonzero entries), it holds that
\begin{equation}\label{eq:ripdef}
 (1-\eps) \cdot \|x\|_2^2 \leq \|Ax\|_2^2 \leq (1+\eps) \cdot \|x\|_2^2 \; .
\end{equation}
Intuitively, this means that every $k$ columns of $A$ are nearly orthogonal. This notion, due to Cand{\`e}s and Tao~\cite{CandesT05}, was intensively studied during the last decade and found various applications and connections to several areas of theoretical computer science, including sparse recovery~\cite{Candes08,IndykR13,NelsonPW14}, coding theory~\cite{CheraghchiGV13}, norm embeddings~\cite{BaraniukDDW08,KrahmerW11}, and computational complexity~\cite{BandeiraDMS13,TillmannP14,NatarajanW14}.

The original motivation for the restricted isometry property comes from the area of compressed sensing. There, one
wishes to compress a high-dimensional sparse vector $x \in \C^N$ to a vector $Ax$, where $A \in \C^{q \times N}$
is a measurement matrix that enables reconstruction of $x$ from $Ax$. Typical goals in this context include
minimizing the number of measurements $q$ and the running time of the reconstruction algorithm.
It is known that the restricted isometry property of $A$, for $\eps < \sqrt{2}-1$, is a sufficient condition for
reconstruction. In fact, it was shown in~\cite{CandesT05,CandesRTV05,CandesRT06,Candes08} that under
this condition, reconstruction is equivalent to finding the vector of least $\ell_1$ norm among all vectors that
agree with the given measurements, a task that can be formulated as a linear program~\cite{ChenDS98,DonohoET06},
and thus can be solved efficiently.

The above application leads to the challenge of finding matrices $A \in \C^{q \times N}$ that satisfy the restricted isometry property and have a small number of rows $q$ as a function of $N$ and $k$. (For simplicity, we ignore for now the dependence on $\eps$.) A general lower bound of $q = \Omega(k \cdot \log (N/k))$ is known to follow from~\cite{GarnaevG} (see also~\cite{FoucartPRU10}). Fortunately, there are matrices that match this lower bound, e.g., random matrices whose entries are chosen independently according to the normal distribution~\cite{CandesT}. However, in many applications the measurement matrix cannot be chosen arbitrarily but is instead given by a random sample of rows from a unitary matrix, typically the discrete Fourier transform. This includes, for instance, various tests and experiments in medicine and biology (e.g., MRI~\cite{MRI} and ultrasound imaging~\cite{Ultrasound}) and applications in astronomy (e.g., radio telescopes~\cite{WengerDSGM10}). An advantage of subsampled Fourier matrices is that they support fast matrix-vector multiplication, and as such, are useful for efficient compression as well as for efficient reconstruction based on iterative methods (see, e.g.,~\cite{NeedellT10}).

In recent years, with motivation from both theory and practice, an intensive line of research has aimed to study the restricted isometry property of random sub-matrices of unitary matrices. Letting $A \in \C^{q \times N}$ be a (normalized) matrix whose rows are chosen uniformly and independently from the rows of a unitary matrix $M \in \C^{N \times N}$, the goal is to prove an upper bound on $q$ for which $A$ is guaranteed to satisfy the restricted isometry property with high probability. Note that the fact that the entries of every row of $A$ are not independent makes this question much more difficult than in the case of random matrices with independent entries.

The first upper bound on the number of rows of a subsampled Fourier matrix that satisfies the restricted isometry property was $O(k \cdot \log^6 N)$, which was proved by Cand{\`e}s and Tao~\cite{CandesT}. This was then improved by Rudelson and Vershynin~\cite{RudelsonV08} to $O(k \cdot \log^2 k \cdot \log(k \log N) \cdot \log N)$ (see also~\cite{Rauhut10,Dirksen15} for a simplified analysis with better success probability). A modification of their analysis led to an improved bound of $O(k \cdot \log^3 k \cdot \log N)$ by Cheraghchi, Guruswami, and Velingker~\cite{CheraghchiGV13}, who related the problem to a question on the list-decoding rate of random linear codes over finite fields. Interestingly, replacing the $\log(k \log N)$ term in the bound of~\cite{RudelsonV08} by $\log k$ was crucial for their application.\footnote{Note that the list-decoding result of~\cite{CheraghchiGV13} was later improved by Wootters~\cite{Wootters13} using different techniques.} Recently, Bourgain~\cite{Bourgain14} proved a bound of $O(k \cdot \log k \cdot \log^2 N)$, which is incomparable to those of~\cite{RudelsonV08,CheraghchiGV13} (and has a worse dependence on $\eps$; see below).
We finally mention that the best known lower bound on the number of rows is $\Omega(k \cdot \log N)$~\cite{BandeiraLM15}.

\subsection{Our Contribution}
In this work, we improve the previous bounds and prove the following.
\begin{theorem}[Simplified]\label{thm:MainIntro}
Let $M \in \C^{N \times N}$ be a unitary matrix with entries of absolute value $O(1/\sqrt{N})$, and let $\eps>0$ be a fixed constant. For some $q = O(k \cdot \log^2 k \cdot \log N )$, let $A \in \C^{q \times N}$ be a matrix whose $q$ rows are chosen uniformly and independently from the rows of $M$, multiplied by $\sqrt{N/q}$. Then, with high probability, the matrix $A$ satisfies the restricted isometry property of order $k$ with constant $\eps$.
\end{theorem}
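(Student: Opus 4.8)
The standard route to a restricted isometry bound of this type is via the chaining machinery: write the RIP constant as
$\delta_k = \sup_{x \in D_{k,N}} \big| \|Ax\|_2^2 - \|x\|_2^2 \big|$, where $D_{k,N}$ is the set of $k$-sparse unit vectors, and bound $\expectation \delta_k$ together with a concentration statement around it. Since each row of $A$ is $\sqrt{N/q}$ times a row of $M$, and the rows are i.i.d., $\|Ax\|_2^2 - \|x\|_2^2$ is a sum of $q$ i.i.d.\ mean-zero terms, so after symmetrization and Dudley's inequality the task reduces to estimating an expected supremum of a (conditionally) Gaussian or Bernoulli process indexed by $D_{k,N}$, with metric governed by $\|x-y\|$ in a norm built from the sampled rows, dominated by $\sqrt{N}\,\|x\|_\infty$-type quantities. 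I would first set up this reduction cleanly (symmetrization, the ``$\|A^*A - I\|$ on $k$-sparse supports'' viewpoint), so that everything comes down to controlling $\gamma_2$-type functionals (or a chain of covering numbers) of the set of $k$-sparse directions under the relevant mixed $\ell_2/\ell_\infty$-geometry.

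**Key steps, in order.**
First, reduce to bounding the expectation of the process and a tail bound (a standard concentration inequality for suprema of sums of independent bounded random operators, e.g.\ a matrix/operator Bernstein-type or Talagrand-type bound, handles the deviation once the mean is controlled). Second, split the chaining into two regimes by a scale parameter: at coarse scales use a volumetric/Maurey-type covering of $D_{k,N}$ in $\ell_2$ (giving entropy $\sim k\log(N/k)$), and at fine scales use an empirical-process bound exploiting that the relevant metric is an $\ell_\infty$-type metric scaled by $\sqrt{N}$, for which sparsity gives entropy bounds with only $\log k$ and $\log N$ factors. Third — and this is where the improvement over Rudelson–Vershynin must come from — carefully balance these two estimates so that the product of entropy integrals yields $q = O(k \log^2 k \cdot \log N)$ rather than the extra $\log(k\log N)$ factor; this typically requires an iterated/recursive covering argument (covering $D_{k,N}$ by translates of smaller-scale $k$-sparse nets, à la the ``golfing''/multi-level chaining refinements) rather than a single application of Dudley. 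Fourth, collect the pieces: choose $q$ so that $\expectation\delta_k \le \eps/2$, invoke the concentration bound to get $\delta_k \le \eps$ with high probability, and unwind to the RIP statement \eqref{eq:ripdef}.

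**Main obstacle.**
The genuine difficulty is the fine-scale entropy/chaining step and its balancing against the coarse scale. A naive Dudley integral over $D_{k,N}$ in the $\ell_\infty$-metric loses a logarithmic factor precisely because the metric entropy at small scales, combined with the dynamic range of scales one must integrate over, produces a $\log(k\log N)$; squeezing this down to $\log^2 k \cdot \log N$ requires a sharper handle on the covering numbers of $k$-sparse unit vectors in the mixed norm — e.g.\ a two-sided interpolation between the $\ell_2$ volumetric bound and a coordinate-wise $\ell_\infty$ bound, or an explicit multiscale decomposition of each $k$-sparse vector into a ``large-coordinates'' part and a ``small-coordinates'' tail handled separately. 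I expect the bulk of the technical work (and all the cleverness) to sit in this combinatorial-geometric estimate on $k$-sparse nets, with the probabilistic scaffolding (symmetrization, concentration, final parameter choice) being comparatively routine.
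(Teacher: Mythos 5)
There is a genuine gap, and it sits exactly where you locate the ``cleverness.'' Your plan is to run the Rudelson--Vershynin machinery (symmetrization, Dudley/$\gamma_2$ chaining over the set of $k$-sparse unit vectors in the sampled-$\ell_\infty$ metric, plus a concentration bound around the mean) and then, in your third step, to ``carefully balance'' the coarse and fine scale entropy estimates so that the $\log(k\log N)$ factor disappears and $q = O(k\log^2 k\cdot \log N)$ comes out. But that balancing is not a routine refinement: within that framework the known analyses give $O(k\log^2 k\cdot\log(k\log N)\cdot\log N)$ (Rudelson--Vershynin) or $O(k\log^3 k\cdot\log N)$ (Cheraghchi--Guruswami--Velingker), and it is not known how to reach the stated bound by an ``iterated/recursive covering argument'' inside the Dudley/Gaussian-process setup --- indeed the paper explicitly poses as an open question whether its proof can be cast in that framework at all. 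So your proposal defers the entire new content of the theorem to an unspecified entropy/balancing estimate, and there is no evidence that such an estimate exists in the form you assume; as written, the argument would terminate at one of the previously known bounds.

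The paper's actual route avoids Gaussian processes, symmetrization, and Dudley's inequality altogether. It normalizes $\|x\|_1=1$, views $Mx$ as the expectation of a random column of $M$ (with signs), and uses Maurey's empirical method to build explicit finite nets $\calG_i$ of empirical averages, from which it derives multiscale nets ($\calH_i$, and $\calD_{i,m}$ in the sharper version) approximating the vector $|Mx|^2$ scale by scale. Each net is handled by a Chernoff--Hoeffding bound plus a union bound, and the key quantitative point --- the one your sketch is missing --- is that as the scale index $i$ grows the nets get larger but the additive accuracy required of the random sample $Q$ at that scale gets proportionally weaker (because $|Mx|^2$ is small there), and these two effects cancel exactly, yielding $q = O(k\log^2 k\cdot\log N)$. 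If you want to salvage your outline, you would need either to import this multiscale decomposition of $|Mx|^2$ (not of $x$) into your chaining, or to prove the sharpened covering-number/entropy-integral estimate you are assuming, which is precisely the open problem rather than a known tool.
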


The main idea in our proof is described in Section~\ref{sec:overview}.
We arrived at the proof from our recent work on list-decoding~\cite{HavivR15},
where a baby version of the idea
was used to bound the sample complexity of learning the class of Fourier-sparse
Boolean functions.\footnote{%
The result in~\cite{HavivR15} is weaker in two main respects.
First, it is restricted to the case that $Ax$ is in $\{0,1\}^q$.
This significantly simplifies the analysis and leads to a better bound on the
number of rows of $A$.
Second, the order of quantifiers is switched, namely it shows that
for any sparse $x$, a random subsampled $A$ works with high probability,
whereas for the restricted isometry property we need to show that a random $A$ works for all sparse $x$.}
Like all previous work on this question, our proof can be seen as
a careful union bound applied to a sequence of progressively finer nets,
a technique sometimes known as chaining.
However, unlike the work of Rudelson and Vershynin~\cite{RudelsonV08} and
its improvements~\cite{CheraghchiGV13,Dirksen15}, we avoid the
use of Gaussian processes, the ``symmetrization process,'' and
Dudley's inequality. Instead, and more in line with
Bourgain's proof~\cite{Bourgain14}, we apply the chaining argument
directly to the problem at hand using only elementary arguments.
It would be interesting to see if our proof can be cast in the Gaussian framework of
Rudelson and Vershynin.

We remark that the bounds obtained in the previous works~\cite{RudelsonV08,CheraghchiGV13} have a multiplicative $O(\eps^{-2})$ term, where a much worse term of $O(\eps^{-6})$ was obtained in~\cite{Bourgain14}. In our proof of Theorem~\ref{thm:MainIntro} we nearly obtain the best known dependence on $\eps$. For simplicity of presentation we first prove in Section~\ref{sec:simple} our bound with a weaker multiplicative term of $O(\eps^{-4})$, and then, in Section~\ref{sec:improved}, we modify the analysis and decrease the dependence on $\eps$ to $\widetilde{O}(\eps^{-2})$.

\subsection{Related Literature}\label{sec:related}

As mentioned before, one important advantage of using subsampled Fourier matrices in compressed sensing is that they support fast, in fact nearly linear time, matrix-vector multiplication.
In certain scenarios, however, one is not restricted to using subsampled Fourier matrices
as the measurement matrix. The question then is whether
one can decrease the number of rows using another measurement matrix,
while still keeping the near-linear multiplication time.
For $k < N^{1/2-\gamma}$ where $\gamma>0$ is an arbitrary constant, the
answer is yes: a construction with the \emph{optimal} number $O(k \cdot \log N)$
of rows follows from works by Ailon and Chazelle~\cite{AilonC09} and Ailon and Liberty~\cite{AilonL09} (see~\cite{BaraniukDDW08}).
For general $k$, Nelson, Price, and Wootters~\cite{NelsonPW14} suggested
taking subsampled Fourier matrices and ``tweaking'' them by bunching
together rows with random signs.
Using the Gaussian-process-based analysis of~\cite{RudelsonV08,CheraghchiGV13}
and introducing further techniques from~\cite{KrahmerMR12},
they showed that with this construction one can reduce the number of rows by a logarithmic
factor to $O(k \cdot \log^2 (k \log N) \cdot \log N)$ while still keeping the nearly linear multiplication time.
Our result shows that the same number of rows (in fact, a slightly smaller number)
can be achieved already with the original subsampled Fourier matrices without having
to use the ``tweak.''
A natural open question is whether the ``tweak'' from~\cite{NelsonPW14} and their techniques
can be combined with ours to further reduce the number of rows.
An improvement in the regime of parameters of $k = \omega(\sqrt{N})$ would lead to more efficient low-dimensional embeddings based on Johnson--Lindenstrauss matrices (see, e.g.,~\cite{AilonC09,AilonL09,KrahmerW11,AilonL13,NelsonPW14}).

\subsection{Proof Overview}\label{sec:overview}

Recall from Theorem~\ref{thm:MainIntro} and from~\eqref{eq:ripdef} that our goal
is to prove that a matrix $A$ given by a random sample  $Q$ of $q$ rows of $M$ satisfies
with high probability that for all $k$-sparse $x$, $\|Ax\|_2^2 \approx \|x\|_2^2$.
Since $M$ is unitary, the latter is equivalent to saying that
$
\|Ax\|_2^2 \approx \|Mx\|_2^2
$. Yet another way of expressing this condition is as
\[
\expectation_{j \in Q} \bracks[\big]{(|Mx|^2)_j} \approx \expectation_{j \in [N]}\bracks[\big]{(|Mx|^2)_j} \; ,
\]
i.e., that a sample $Q \subseteq [N]$ of $q$ coordinates of the vector $|Mx|^2$ gives a
good approximation to the average of all its coordinates. Here, $|Mx|^2$ refers
to the vector obtained by taking the squared absolute value of $Mx$ coordinate-wise.
For reasons that will become clear soon, it will be convenient to assume
without loss of generality that $\|x\|_1=1$. With this scaling, the sparsity assumption
implies
that $\|Mx\|_2^2$ is not too small (namely at least $1/k$), and this
will determine the amount of additive error we can afford in the approximation above.
This is the only way we use the sparsity assumption.

At a high level, the proof proceeds by defining a finite set of vectors $\calH$
that forms a \emph{net}, i.e., a set satisfying that any vector $|Mx|^2$ is
close to one of the vectors in $\calH$. We then argue using the
Chernoff-Hoeffding bound that for any fixed vector
$h \in \calH$, a sample of $q$ coordinates gives a good approximation to the
average of $h$. Finally, we complete the proof by a union bound over all $h \in \calH$.

In order to define the set $\calH$ we notice that since $\|x\|_1=1$,
$Mx$ can be seen as a weighted average of
the columns of $M$ (possibly with signs).
In other words, we can think of $Mx$ as the
\emph{expectation} of a vector-valued random variable given by a certain
probability distribution over the columns of $M$.
Using the Chernoff-Hoeffding bound again,
this implies that we can approximate $Mx$ well by taking the average
over a small number of samples from this distribution.
We then let $\calH$ be the set of all possible such averages,
and a bound on the cardinality of $\calH$ follows easily
(basically $N$ raised to the number of samples).
This technique is sometimes referred to as Maurey's empirical method.

The argument above is actually oversimplified, and carrying it out
leads to rather bad bounds on $q$. As a result, our proof in
Section~\ref{sec:simple} is slightly more delicate.
Namely, instead of just one set $\calH$, we have
a sequence of sets, $\calH_1,\calH_2, \ldots$,
each being responsible for approximating a different
scale of $|Mx|^2$. The first set $\calH_1$ approximates
$|Mx|^2$ on coordinates on which its value is highest;
since the value is high, we need less samples
in order to approximate it well,  as a result of which
the set $\calH_1$ is small. The next set $\calH_2$
approximates $|Mx|^2$ on coordinates on which its value is
somewhat smaller, and is therefore a bigger set, and so on and so forth.
The end result is that any vector $|Mx|^2$ can be
approximately decomposed into a sum $\sum_i h^{(i)}$,
with $h^{(i)} \in \calH_i$. To complete the proof, we argue
that a random choice of $q$ coordinates approximates all
the vectors in all the $\calH_i$ well.
The reason working with several $\calH_i$ leads to
the better bound stated in Theorem~\ref{thm:MainIntro}
is this: even though as $i$ increases the number of vectors
in $\calH_i$ grows, the quality of approximation that we
need the $q$ coordinates to provide decreases, since
the value of $|Mx|^2$ there is small and so errors are less
significant. It turns out that these two requirements on $q$
balance each other perfectly, leading to the desired bound on $q$.

\paragraph{Acknowledgments.}
We thank Afonso S.~Bandeira, Mahdi Cheraghchi, Michael Kapralov, Jelani Nelson, and Eric Price for useful discussions, and anonymous reviewers for useful comments.

\section{Preliminaries}

\paragraph{Notation.}
The notation $x \approx_{\eps,\alpha} y$ means that $x \in [(1-\eps)y-\alpha,(1+\eps)y+\alpha]$.
For a matrix $M$, we denote by $M^{(\ell)}$ the $\ell$th column of $M$ and define $\|M\|_\infty = \max_{i,j}{|M_{i,j}|}$.

\paragraph{The Restricted Isometry Property.}
The restricted isometry property is defined as follows.
\begin{definition}\label{def:RIP}
We say that a matrix $A \in \C^{q \times N}$ satisfies the {\em restricted isometry property} of order $k$ with constant $\eps$ if for every $k$-sparse vector $x \in \C^N$ it holds that
\[(1-\eps) \cdot \|x\|_2^2 \leq \|Ax\|_2^2 \leq (1+\eps) \cdot \|x\|_2^2.\]
\end{definition}

\paragraph{Chernoff-Hoeffding Bounds.}
We now state the Chernoff-Hoeffding bound
(see, e.g.,~\cite{McDiarmid98}) and derive several simple corollaries
that will be used extensively later.

\begin{theorem}\label{thm:Chernoff}
Let $X_1,\ldots,X_N$ be $N$ identically distributed independent random variables in $[0,a]$ satisfying $\expectation[X_i]=\mu$ for all $i$, and denote $\overline{X} = \frac{1}{N} \cdot \sum_{i=1}^{N}{X_i}$. Then there exists a universal constant $C$ such that for every $0< \eps \leq 1/2$, the probability that $\overline{X} \approx_{\eps,0} \mu$ is at least $1-2e^{-C \cdot N \mu \eps^2/a}$.
\end{theorem}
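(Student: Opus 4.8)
The statement is the standard multiplicative Chernoff bound, and the plan is to prove it by the moment-generating-function method. First I would reduce to the case $a=1$: setting $Y_i=X_i/a$ gives independent variables supported on $[0,1]$ with mean $\mu/a$ and $\overline{Y}=\overline{X}/a$; the event $\overline{X}\approx_{\eps,0}\mu$ is equivalent to $\overline{Y}\approx_{\eps,0}(\mu/a)$, and the exponent $N\mu\eps^2/a$ becomes $N(\mu/a)\eps^2$, so it suffices to treat variables supported on $[0,1]$. Write $S=\sum_{i=1}^N X_i$ for the (rescaled) sum, so that $\expectation[S]=N\mu$.

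For the upper tail I would apply Markov's inequality to $e^{tS}$ with a parameter $t>0$. The convexity bound $e^{tx}\le 1+(e^t-1)x$ on $x\in[0,1]$, together with independence, gives $\expectation[e^{tS}]\le(1+(e^t-1)\mu)^N\le\exp((e^t-1)N\mu)$, hence $\Pr[S\ge(1+\eps)N\mu]\le\exp((e^t-1)N\mu-t(1+\eps)N\mu)$; choosing $t=\ln(1+\eps)$ optimizes this to $(e^{\eps}/(1+\eps)^{1+\eps})^{N\mu}$. The symmetric computation with $t<0$ bounds the lower tail $\Pr[S\le(1-\eps)N\mu]$ by $(e^{-\eps}/(1-\eps)^{1-\eps})^{N\mu}$. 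I would then invoke the elementary inequalities $e^{\eps}/(1+\eps)^{1+\eps}\le e^{-\eps^2/3}$ and $e^{-\eps}/(1-\eps)^{1-\eps}\le e^{-\eps^2/2}$, valid for $0<\eps\le 1/2$, each of which reduces to a single-variable estimate (for instance $(1+\eps)\ln(1+\eps)\ge\eps+\eps^2/3$ on this range, checked by comparing Taylor expansions or examining the sign of a derivative). A union bound over the two tails gives $\Pr[\overline{X}\not\approx_{\eps,0}\mu]\le 2e^{-N\mu\eps^2/3}$, which is the claim with $C=1/3$; undoing the rescaling from the first step reinstates the factor $1/a$ in the exponent.

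I do not expect any genuine obstacle here, since this is a textbook result: the only mildly delicate point is verifying the two scalar inequalities in the last step, and this is exactly where the hypothesis $\eps\le 1/2$ enters. Alternatively, one could simply cite a reference such as~\cite{McDiarmid98} for the bound in this exact form, but spelling out the short argument above keeps the paper self-contained.
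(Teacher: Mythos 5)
Your argument is correct: the rescaling to $[0,1]$, the convexity bound $e^{tx}\le 1+(e^t-1)x$ giving $\expectation[e^{tS}]\le \exp((e^t-1)N\mu)$, the optimized choices $t=\ln(1\pm\eps)$, and the scalar estimates $e^{\eps}/(1+\eps)^{1+\eps}\le e^{-\eps^2/3}$ and $e^{-\eps}/(1-\eps)^{1-\eps}\le e^{-\eps^2/2}$ on $0<\eps\le 1/2$ together yield the stated bound with $C=1/3$ after a union bound over the two tails. The paper gives no proof of this theorem at all---it is stated as a standard fact with a pointer to~\cite{McDiarmid98}---so your self-contained moment-generating-function derivation is exactly the textbook argument that citation stands in for, and, as you note, simply citing is what the authors in fact do.
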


\begin{corollary}\label{cor:ChernoffCombined}
Let $X_1,\ldots,X_N$ be $N$ identically distributed independent random variables in $[0,a]$ satisfying $\expectation[X_i]=\mu$ for all $i$, and denote $\overline{X} = \frac{1}{N} \cdot \sum_{i=1}^{N}{X_i}$. Then there exists a universal constant $C$ such that for every $0< \eps \leq 1/2$ and $\alpha >0$,
the probability that $\overline{X} \approx_{\eps,\alpha} \mu$ is at least $1-2e^{-C \cdot N \alpha \eps/a}$.
\end{corollary}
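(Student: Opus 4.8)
The plan is to deduce this from Theorem~\ref{thm:Chernoff} by a short case analysis on the size of $\mu$. The guiding observation is that the bound in Theorem~\ref{thm:Chernoff} carries only a multiplicative error, whereas the extra additive slack $\alpha$ here lets us argue as if the mean were larger than it actually is. Two elementary containments will be used repeatedly: $|\overline X - \mu| \le \alpha$ implies $\overline X \approx_{\eps,\alpha}\mu$ (because $(1-\eps)\mu - \alpha \le \mu - \alpha$ and $\mu + \alpha \le (1+\eps)\mu + \alpha$), and $|\overline X - \mu| \le \eps\mu$ also implies $\overline X \approx_{\eps,\alpha}\mu$; so in each case it suffices to control one of these two events.

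First I would dispose of two easy regimes. If $\alpha \ge a$, then since $\overline X,\mu \in [0,a]$ we get $(1-\eps)\mu - \alpha \le 0 \le \overline X \le a \le (1+\eps)\mu + \alpha$ deterministically, so the probability is $1$. If $\eps\mu \ge \alpha$, I would apply Theorem~\ref{thm:Chernoff} directly with the same $\eps$: the event $\overline X \approx_{\eps,0}\mu$ occurs with probability at least $1 - 2e^{-CN\mu\eps^2/a} \ge 1 - 2e^{-CN\alpha\eps/a}$, using $\mu\eps^2 \ge \alpha\eps$, and it implies $\overline X \approx_{\eps,\alpha}\mu$.

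The remaining case, $\alpha < a$ and $\eps\mu < \alpha$, is the heart of the matter, and here the trick is to shift the variables. Set $c := \max(0,\, 2\alpha - \mu)$ and consider the i.i.d.\ variables $X_i + c$, which lie in $[0, a+c]$ and have mean $\mu + c = \max(\mu, 2\alpha)$; note $2\alpha \le \mu + c \le \alpha/\eps$, where the upper bound uses $\mu < \alpha/\eps$ together with $2\alpha \le \alpha/\eps$ (valid since $\eps \le 1/2$). Applying Theorem~\ref{thm:Chernoff} to $X_i + c$ with multiplicative parameter $\widetilde\eps := \alpha/(\mu+c) \le 1/2$ then yields
\[
\Pr\bigl[\,|\overline X - \mu| > \alpha\,\bigr] \;\le\; 2\exp\!\left(-\frac{CN\alpha^2}{(\mu+c)(a+c)}\right),
\]
and I would finish by bounding the exponent: $\mu + c \le \alpha/\eps$ and $a + c \le a + 2\alpha \le 3a$ (using $\alpha < a$), so $(\mu+c)(a+c) \le 3a\alpha/\eps$ and the right-hand side is at most $2e^{-CN\alpha\eps/(3a)}$. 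Since $|\overline X - \mu| \le \alpha$ gives $\overline X \approx_{\eps,\alpha}\mu$, all cases are covered with a universal constant that is a fixed fraction of the one in Theorem~\ref{thm:Chernoff}.

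I expect the only genuine obstacle to be finding this reduction in the first place: a single naive use of Theorem~\ref{thm:Chernoff}, or rescaling the $X_i$, is too weak when $\mu$ is tiny, since then the multiplicative exponent $\mu\eps^2/a$ is far smaller than the target $\alpha\eps/a$. Shifting is what repairs this — it pushes the mean up to order $\alpha/\eps$ without inflating the range past $O(a)$ — but only after the trivial regime $\alpha \ge a$ has been peeled off, which is precisely where an uncapped shift would fail to keep the range under control.
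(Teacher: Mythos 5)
Your proof is correct. It coincides with the paper's on the large-mean case: when $\eps\mu \ge \alpha$ both you and the paper apply Theorem~\ref{thm:Chernoff} directly with parameter $\eps$ and use $\mu\eps^2 \ge \alpha\eps$. The two arguments diverge on the small-mean case $\mu < \alpha/\eps$. The paper reapplies Theorem~\ref{thm:Chernoff} with the enlarged relative parameter $\tilde{\eps} = \alpha/\mu > \eps$, converting the multiplicative guarantee $\approx_{\tilde{\eps},0}$ into the additive one $\approx_{0,\alpha}$ via $\tilde{\eps}\mu=\alpha$, with exponent $\mu\tilde{\eps}^2 = \alpha^2/\mu \ge \alpha\eps$; note that this $\tilde{\eps}$ may exceed $1/2$ (and is not even defined when $\mu=0$), so that application strictly falls outside the stated range $0<\eps\le 1/2$ of Theorem~\ref{thm:Chernoff} and implicitly leans on the standard extension of the bound to larger relative deviations. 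Your shifting device --- replacing $X_i$ by $X_i+c$ with $c=\max(0,2\alpha-\mu)$, so that the mean becomes $\max(\mu,2\alpha)\le\alpha/\eps$ while the parameter $\tilde{\eps}=\alpha/(\mu+c)$ stays in $(0,1/2]$ --- reaches the same additive conclusion while remaining entirely within the theorem's stated hypotheses, at the cost of peeling off the trivial regime $\alpha\ge a$ (needed so that $a+c\le 3a$ keeps the range under control) and a benign constant-factor loss in $C$. So your route is a bit longer but, if anything, more careful than the paper's about the parameter-range issue, and both deliver the required failure probability $2e^{-\Omega(N\alpha\eps/a)}$.
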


\begin{proof}
If $\mu \geq \frac{\alpha}{\eps}$ then by Theorem~\ref{thm:Chernoff} the probability that $\overline{X} \approx_{\eps,0} \mu$ is at least $1-2e^{-C \cdot N \mu \eps^2/a}$, which is at least $1-2e^{-C \cdot N \alpha \eps/a}$. Otherwise, Theorem~\ref{thm:Chernoff} for $\tilde{\eps} = \frac{\alpha}{\mu} > \eps$ implies that the probability that $\overline{X} \approx_{\tilde{\eps},0} \mu$, hence $\overline{X} \approx_{0,\alpha} \mu$, is at least $1-2e^{-C \cdot N \mu \tilde{\eps}^2/a}$, and the latter is at least $1-2e^{-C \cdot N \alpha \eps/a}$.
\end{proof}

\begin{corollary}\label{cor:ChernoffCombinedNew}
Let $X_1,\ldots,X_N$ be $N$ identically distributed independent random variables in $[-a,+a]$ satisfying $\expectation[X_i]=\mu$ and $\expectation[|X_i|]=\tilde{\mu}$ for all $i$, and denote $\overline{X} = \frac{1}{N} \cdot \sum_{i=1}^{N}{X_i}$. Then there exists a universal constant $C$ such that for every $0< \eps' \leq 1/2$ and $\alpha >0$,
the probability that $\overline{X} \approx_{0,\eps' \cdot \tilde{\mu} + \alpha} \mu$ is at least $1-4e^{-C \cdot N \alpha \eps'/a}$.
\end{corollary}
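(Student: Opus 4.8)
The plan is to split each $X_i$ into its positive and negative parts and apply Corollary~\ref{cor:ChernoffCombined} to each part separately. Concretely, write $X_i = X_i^+ - X_i^-$ where $X_i^+ = \max(X_i,0)$ and $X_i^- = \max(-X_i,0)$, so that $X_i^+, X_i^- \in [0,a]$ and $|X_i| = X_i^+ + X_i^-$. Set $\mu^+ = \expectation[X_i^+]$ and $\mu^- = \expectation[X_i^-]$, so $\mu = \mu^+ - \mu^-$ and $\tilde\mu = \mu^+ + \mu^-$. Denote the corresponding empirical averages $\overline{X^+}$ and $\overline{X^-}$, so that $\overline{X} = \overline{X^+} - \overline{X^-}$.

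First I would apply Corollary~\ref{cor:ChernoffCombined} to $X_1^+,\ldots,X_N^+$ with the given $\eps'$ (playing the role of $\eps$ there) and with additive slack $\alpha/2$: with probability at least $1 - 2e^{-C \cdot N (\alpha/2) \eps'/a}$ we have $\overline{X^+} \in [(1-\eps')\mu^+ - \alpha/2,\ (1+\eps')\mu^+ + \alpha/2]$. Likewise for $X_1^-,\ldots,X_N^-$, with probability at least $1 - 2e^{-C \cdot N (\alpha/2)\eps'/a}$ we have $\overline{X^-} \in [(1-\eps')\mu^- - \alpha/2,\ (1+\eps')\mu^- + \alpha/2]$. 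By a union bound both hold simultaneously with probability at least $1 - 4e^{-C \cdot N (\alpha/2)\eps'/a}$; after absorbing the factor $1/2$ into the universal constant $C$ this is at least $1 - 4e^{-C \cdot N \alpha \eps'/a}$ for a suitable (renamed) constant.

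Conditioned on both events, I would bound $\overline{X} - \mu = (\overline{X^+} - \mu^+) - (\overline{X^-} - \mu^-)$ by the triangle inequality: $|\overline{X} - \mu| \le |\overline{X^+} - \mu^+| + |\overline{X^-} - \mu^-| \le (\eps' \mu^+ + \alpha/2) + (\eps' \mu^- + \alpha/2) = \eps'(\mu^+ + \mu^-) + \alpha = \eps' \tilde\mu + \alpha$. This is exactly the statement $\overline{X} \approx_{0,\eps'\tilde\mu + \alpha} \mu$, completing the argument.

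The only point requiring a little care — and the place I would flag as the main (minor) obstacle — is the bookkeeping of constants: Corollary~\ref{cor:ChernoffCombined} is invoked with slack $\alpha/2$ and the two failure probabilities are added, so one must check that $4e^{-C N (\alpha/2)\eps'/a}$ can be rewritten as $4e^{-C' N \alpha \eps'/a}$ with a universal $C'$, which is immediate since the exponent only changes by the constant factor $1/2$. One should also note the hypotheses of Corollary~\ref{cor:ChernoffCombined} are met: $X_i^\pm$ take values in $[0,a]$ (using $X_i \in [-a,a]$), they are i.i.d.\ since the $X_i$ are, and $\eps' \le 1/2$ as required. No other subtlety arises.
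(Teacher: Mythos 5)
Your proposal is correct and follows essentially the same route as the paper, which proves the corollary by applying Corollary~\ref{cor:ChernoffCombined} to $\max(X_i,0)$ and $-\min(X_i,0)$; your write-up just makes explicit the slack-splitting, union bound, and triangle inequality that the paper leaves implicit.
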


\begin{proof}
The corollary follows by applying Corollary~\ref{cor:ChernoffCombined} to $\max (X_i, 0)$ and to $-\min (X_i, 0)$.
\end{proof}

We end with the additive form of the bound, followed by an easy extension to the complex case.

\begin{corollary}\label{thm:Chernoff1}
Let $X_1,\ldots,X_N$ be $N$ identically distributed independent random variables in $[-a,+a]$ satisfying $\expectation[X_i]=\mu$ for all $i$, and denote $\overline{X} = \frac{1}{N} \cdot \sum_{i=1}^{N}{X_i}$. Then there exists a universal constant $C$ such that for every $b>0$,
the probability that $\overline{X} \approx_{0,b} \mu$ is at least $1-4e^{-C \cdot N b^2/a^2}$.
\end{corollary}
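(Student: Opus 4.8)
The statement to prove is Corollary~\ref{thm:Chernoff1}: for independent identically distributed random variables $X_1,\ldots,X_N$ in $[-a,+a]$ with mean $\mu$ and empirical mean $\overline{X}$, the probability that $\overline{X} \approx_{0,b} \mu$ (i.e.\ $|\overline{X}-\mu| \le b$) is at least $1-4e^{-C \cdot Nb^2/a^2}$ for some universal constant $C$ and every $b>0$. The plan is to reduce this additive two-sided bound to the multiplicative one-sided bound of Theorem~\ref{thm:Chernoff} by an affine change of variables that maps $[-a,a]$ into $[0,2a]$, and then to handle the two tails separately.

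\textbf{Step 1: Shift to nonnegative variables.} Define $Y_i = X_i + a$, so that $Y_i \in [0,2a]$, $\expectation[Y_i] = \mu + a =: \nu$, and $\overline{Y} := \frac{1}{N}\sum_i Y_i = \overline{X} + a$. The event $|\overline{X}-\mu|\le b$ is exactly the event $|\overline{Y}-\nu|\le b$, so it suffices to bound $\Pr[|\overline{Y}-\nu| > b]$. Note $Y_i$ takes values in $[0,2a]$ and has mean $\nu \in [0,2a]$.

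\textbf{Step 2: Apply the multiplicative bound to each tail.} If $\nu = 0$ then $Y_i \equiv 0$ a.s.\ and the bound is trivial, so assume $\nu>0$. Apply Theorem~\ref{thm:Chernoff} with the interval parameter $2a$ in place of $a$, mean $\nu$, and relative error $\eps = \min(b/\nu,\,1/2)$. This requires $b/\nu \le 1/2$; if instead $b/\nu > 1/2$ we can still run Theorem~\ref{thm:Chernoff} with $\eps = 1/2$, which only makes the deviation event larger, hence the bound $\Pr[|\overline Y - \nu| > \nu/2] \le 2e^{-C' N\nu/(4\cdot 2a)}$ holds, and since $\nu/2 \ge b$ wait—that is the wrong direction. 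Let me instead argue directly: the cleanest route is to invoke Corollary~\ref{cor:ChernoffCombined} (already proved in the excerpt), applied to $Y_1,\ldots,Y_N$ with interval bound $2a$, mean $\nu$, relative error $\eps = 1/2$, and additive slack $\alpha = b$. Corollary~\ref{cor:ChernoffCombined} gives $\Pr[\overline{Y}\approx_{1/2,b}\nu] \ge 1 - 2e^{-C N b (1/2)/(2a)} = 1 - 2e^{-CNb/(4a)}$. The event $\overline Y \approx_{1/2,b}\nu$ means $\overline Y \in [\nu/2 - b,\ 3\nu/2 + b]$, which is weaker than $|\overline Y - \nu|\le b$ when $\nu$ is large, so this still does not immediately suffice.

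\textbf{Step 3: The right split — upper tail and lower tail via scaling.} Replace the naive slack argument with the following. For the \emph{upper tail}, apply Theorem~\ref{thm:Chernoff} to $Y_i$ with interval $2a$, mean $\nu$, and $\eps_+ = \min(b/\nu, 1/2)$; since $\Pr[\overline Y > \nu + b] \le \Pr[\overline Y > \nu + \eps_+\nu] \le 2e^{-C N \nu \eps_+^2/(2a)}$, and $\nu\eps_+^2 = \min(b^2/\nu,\ \nu/4) \ge \min(b^2/(2a),\ b/2) \ge c\,b^2/a$ using $\nu \le 2a$ and $b \le a$ (the case $b>a$ is handled trivially since then $|\overline X - \mu|\le a < b$ always), we get $\Pr[\overline Y > \nu+b] \le 2e^{-C'' N b^2/a^2}$. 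The \emph{lower tail} $\Pr[\overline Y < \nu - b]$ is bounded identically using the lower-tail form of Chernoff-Hoeffding (equivalently, apply the same argument to $2a - Y_i$, which also lies in $[0,2a]$ with mean $2a-\nu$). Summing the two tail bounds gives failure probability at most $4e^{-C''Nb^2/a^2}$, which is the claimed bound with $C = C''$. The main obstacle is the bookkeeping in Step~3: Theorem~\ref{thm:Chernoff} as stated gives a \emph{multiplicative} guarantee with the relative error squared in the exponent, and one must carefully check that after fixing $\eps = \min(b/\nu,1/2)$ and using $\nu \le 2a$, $b \le a$, the exponent $N\nu\eps^2/a$ is indeed $\Omega(Nb^2/a^2)$ in all regimes — this case analysis (is $b/\nu$ above or below $1/2$?) is the only real content, and it works out because shrinking $\nu$ hurts $\nu\eps^2 = b^2/\nu$ in one regime but $\nu$ is bounded below by $b$ in the other.
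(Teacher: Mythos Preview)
Your Step~3 contains a real gap. The claimed inequality $\nu\eps_+^2 = \min(b^2/\nu,\ \nu/4) \ge \min(b^2/(2a),\ b/2)$ fails when $\nu$ is small: if $\nu < 2b^2/a$ (which is possible, since $\nu = \mu + a$ can be arbitrarily close to $0$ when $\mu$ is near $-a$), then $\nu/4 < b^2/(2a)$ and the left-hand side drops below the claimed lower bound. Your closing remark that ``$\nu$ is bounded below by $b$ in the other'' regime is simply wrong --- Case~2 is defined by the \emph{upper} bound $\nu < 2b$, and nothing in your setup gives a lower bound on $\nu$. (There is also a minor slip: $|\overline X - \mu|$ is bounded by $2a$, not $a$, so the trivial regime is $b > 2a$, not $b>a$.)

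The approach is easily salvageable, though you did not make the fix: by replacing $X_i$ with $-X_i$ if necessary, one may assume $\mu \ge 0$, hence $\nu = \mu + a \ge a$; then, using $b \le 2a$, one has $\nu/4 \ge a/4 \ge b^2/(16a)$, and your case analysis goes through. The paper takes a shorter and different route, avoiding the shift to $[0,2a]$ entirely: it applies the already-proved Corollary~\ref{cor:ChernoffCombinedNew} directly with $\alpha = 3b/4$ and $\eps' = b/(4a)$, observing that $\eps'\tilde\mu + \alpha \le (b/(4a))\cdot a + 3b/4 = b$ (since $\tilde\mu \le a$) and that the resulting exponent $CN\alpha\eps'/a$ is $\Theta(Nb^2/a^2)$.
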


\begin{proof}
We can assume that $b \leq 2a$. The corollary follows by applying Corollary~\ref{cor:ChernoffCombinedNew} to, say, $\alpha = 3b/4$ and $\eps' = b/(4a)$.
\end{proof}

\begin{corollary}\label{cor:ChernoffComplex}
Let $X_1,\ldots,X_N$ be $N$ identically distributed independent complex-valued random variables satisfying $|X_i| \leq a$ and $\expectation[X_i]=\mu$ for all $i$, and denote $\overline{X} = \frac{1}{N} \cdot \sum_{i=1}^{N}{X_i}$. Then there exists a universal constant $C$ such that for every $b>0$, the probability that $|\overline{X}| \approx_{0,b} |\mu|$ is at least $1-8e^{-C \cdot N b^2/a^2}$.
\end{corollary}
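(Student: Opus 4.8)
The plan is to reduce the complex-valued statement to the real-valued additive bound of Corollary~\ref{thm:Chernoff1} by splitting each $X_i$ into its real and imaginary parts. Write $X_i = Y_i + \Ci Z_i$ with $Y_i = \Real(X_i)$ and $Z_i = \Image(X_i)$, so that $|Y_i| \le |X_i| \le a$ and $|Z_i| \le |X_i| \le a$. Since expectation is linear over real and imaginary parts, $\expectation[Y_i] = \Real(\mu)$ and $\expectation[Z_i] = \Image(\mu)$, and similarly $\overline{X} = \overline{Y} + \Ci \overline{Z}$ where $\overline{Y},\overline{Z}$ are the corresponding averages. Apply Corollary~\ref{thm:Chernoff1} separately to $Y_1,\dots,Y_N$ and to $Z_1,\dots,Z_N$, each with additive error parameter $b/\sqrt 2$ (or just $b$ up to adjusting the constant $C$); this gives that with probability at least $1 - 8e^{-C' N b^2/a^2}$, both $|\overline{Y} - \Real(\mu)| \le b/\sqrt 2$ and $|\overline{Z} - \Image(\mu)| \le b/\sqrt 2$ hold, by a union bound over the two $1 - 4e^{-\cdots}$ events.

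Conditioned on that event, the complex number $\overline{X} - \mu$ has $|\overline{X} - \mu|^2 = (\overline{Y} - \Real(\mu))^2 + (\overline{Z} - \Image(\mu))^2 \le b^2/2 + b^2/2 = b^2$, so $|\overline{X} - \mu| \le b$. The final step is the elementary observation that $\bigl| |\overline{X}| - |\mu| \bigr| \le |\overline{X} - \mu|$ by the (reverse) triangle inequality for the modulus on $\C$, which yields $|\overline{X}| \approx_{0,b} |\mu|$ exactly as claimed. Collecting constants: choosing the per-coordinate error $b/\sqrt2$ changes $b^2/a^2$ to $b^2/(2a^2)$ in the exponent, which is absorbed into the universal constant $C$, and the two failure probabilities of $4e^{-\cdots}$ each sum to the stated $8e^{-C N b^2/a^2}$.

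There is essentially no obstacle here; the only point requiring the slightest care is bookkeeping the constant in the exponent when passing from the $b/\sqrt2$ error on each real coordinate to the $b$ error on the modulus, and making sure the union bound over the real and imaginary parts is what produces the factor $8$ rather than $4$. One should also note that Corollary~\ref{thm:Chernoff1} as stated requires the variables to lie in $[-a,+a]$, which is exactly guaranteed for $Y_i$ and $Z_i$ since $|X_i|\le a$ forces $|\Real(X_i)|,|\Image(X_i)|\le a$, so no additional hypothesis is needed.
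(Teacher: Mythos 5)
Your proof is correct and follows exactly the same route as the paper: split into real and imaginary parts, apply Corollary~\ref{thm:Chernoff1} to each with additive error $b/\sqrt{2}$, union bound to get the factor $8$, and conclude via the triangle inequality for the modulus. No gaps; the constant bookkeeping you note is indeed the only point of care.
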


\begin{proof}
By Corollary~\ref{thm:Chernoff1} applied to the real and imaginary parts of the random variables $X_1,\ldots,X_N$ it follows that for a universal constant $C$, the probability that $\Real(\overline{X}) \approx_{0,b/\sqrt{2}} \Real(\mu)$ and $\Image(\overline{X}) \approx_{0,b/\sqrt{2}} \Image(\mu)$ is at least $1-8e^{-C \cdot N b^2/a^2}$. By triangle inequality, it follows that with such probability we have $|\overline{X}| \approx_{0,b} |\mu|$, as required.
\end{proof}

\section{The Simpler Analysis}\label{sec:simple}

In this section we prove our result with a multiplicative term of $O(\eps^{-4})$ in the bound. We start with the following theorem.

\begin{theorem}\label{thm:Main_s}
For a sufficiently large $N$, a matrix $M \in \C^{N \times N}$, and sufficiently small $\eps, \eta > 0$, the following holds.
For some
$q = O  (\eps^{-3} \eta^{-1} \log N \cdot \log^2( 1/\eta)  )$,
let $Q$ be a multiset of $q$ uniform and independent random elements of $[N]$.
Then, with probability $1- 2^{-\Omega(\eps^{-2} \cdot \log N \cdot \log (1/\eta))}$, it holds that for every $x \in \C^N$,
\[\expectation_{j \in Q} \bracks[\big]{|(Mx)_j|^2} \approx_{\eps, \eta \cdot \|x\|_1^2 \cdot \|M\|_\infty^2} \expectation_{j \in [N]}\bracks[\big]{|(Mx)_j|^2}.\]
\end{theorem}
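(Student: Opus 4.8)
The plan is to implement exactly the chaining-over-nets strategy sketched in Section~\ref{sec:overview}, where the number of scales and the granularity of each net are chosen so that the two competing requirements on $q$ balance. By homogeneity I may rescale $x$ so that $\|x\|_1 = 1$, and by absorbing $\|M\|_\infty$ I may assume $\|M\|_\infty = 1/\sqrt{N}$ up to constants; the target then reads $\expectation_{j\in Q}[|(Mx)_j|^2] \approx_{\eps,\eta/N} \expectation_{j\in[N]}[|(Mx)_j|^2]$. The key observation is that $(Mx)_j = \sum_\ell x_\ell M^{(\ell)}_j$, so with $p_\ell = |x_\ell|$ (a probability distribution since $\|x\|_1=1$) the vector $Mx$ equals the expectation $\expectation_{\ell\sim p}[\sign(x_\ell) M^{(\ell)}]$ of a vector-valued random variable whose realizations are columns of $M$, each of $\ell_\infty$-norm $1/\sqrt{N}$. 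Maurey's empirical method then says: averaging $m$ i.i.d.\ samples $\ell_1,\dots,\ell_m$ from $p$ gives a vector $v = \frac1m\sum_t \sign(x_{\ell_t}) M^{(\ell_t)}$ that, by Corollary~\ref{cor:ChernoffComplex} applied coordinatewise (with $a = 1/\sqrt N$), approximates $(Mx)_j$ to additive error $b$ at any fixed coordinate $j$ with probability $1 - 8e^{-\Omega(mb^2 N)}$. The set $\calH$ of all such $m$-sample averages has size at most $(2N)^m$.

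The refinement that gets the stated bound is to use a geometrically decreasing sequence of scales. For $i = 1,2,\dots,T$ with $T = O(\log(1/\eta))$, let $\calH_i$ be the net obtained by Maurey's method with $m_i = O(\eps^{-2} 2^i \log N)$ samples, so $|\calH_i| \le (2N)^{m_i} = 2^{O(m_i \log N)}$; the point of $\calH_i$ is to approximate, at additive $\ell_\infty$-error about $b_i \sim \sqrt{2^{-i}/N}$, the part of $Mx$ living on coordinates $j$ where $|(Mx)_j|^2 \approx 2^{-i}/N$. First I would argue the decomposition statement: for every $x$ there exist $h^{(1)}\in\calH_1,\dots,h^{(T)}\in\calH_T$ such that $|Mx|^2$ is, scale by scale, well-approximated by $\sum_i |h^{(i)}|^2$ (more precisely, each coordinate of $|Mx|^2$ either exceeds the threshold of some scale $i$, in which case $h^{(i)}$ tracks it to relative error $\eps$, or is below the finest threshold $\sim \eta/N$, in which case it is absorbed into the allowed additive error $\eta/N$ directly). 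This is where one must be careful with the bookkeeping of errors $|z|^2$ vs.\ $|z - w|^2 + 2|w||z-w|$ when passing from approximating $Mx$ to approximating $|Mx|^2$ — one uses Corollary~\ref{cor:ChernoffComplex} together with the fact that a coordinate that a net element $h^{(i)}$ "claims" has $|h^{(i)}_j|$ of the right order, so the cross term is controlled. Then the union bound: I want that a single random $Q$ of size $q$ simultaneously approximates $\expectation_{j\in[N]}[|h^{(i)}_j|^2]$ for every $i$ and every $h\in\calH_i$ to additive error $\sim \eps\cdot 2^{-i} + \eta 2^{-i}/N$-ish (the error budget one can afford at scale $i$, since those coordinates contribute little). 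For a fixed $h\in\calH_i$, Corollary~\ref{thm:Chernoff1} (with $a$ the per-coordinate bound on $|h_j|^2$, which is $O(2^{-i}/N)$ since $h$ averages $m_i$ columns and, crucially, one can restrict attention to coordinates of the expected magnitude) gives failure probability $2^{-\Omega(q \cdot (\text{error}/a)^2 )}$; plugging in, this is $2^{-\Omega(q\eps^2 2^i / (\ldots))}$, which must beat $\log|\calH_i| = O(\eps^{-2} 2^i \log^2 N)$ — and this is exactly the balance: the $2^i$ appears on both sides and cancels, the $\eps^{-2}$ and $\log N$ factors work out, and choosing $q = O(\eps^{-3}\eta^{-1}\log N \log^2(1/\eta))$ and summing the geometric series over $i$ closes everything with the claimed failure probability $2^{-\Omega(\eps^{-2}\log N \log(1/\eta))}$.

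I would organize the write-up as: (1) the reduction to $\|x\|_1 = 1$ and restating the goal in expectation form; (2) a lemma defining $\calH_i$ via Maurey/Corollary~\ref{cor:ChernoffComplex} and bounding $|\calH_i|$; (3) the decomposition lemma — every $|Mx|^2$ is approximated scale-by-scale by $\sum_i |h^{(i)}|^2$ plus an additive $O(\eta/N)$ slush; (4) the concentration lemma — a random $Q$ of the stated size approximates all the relevant scale-$i$ averages via Corollary~\ref{thm:Chernoff1}, with the union bound over $\bigcup_i\calH_i$; (5) assembling (3) and (4), using that $\expectation_{j\in[N]}[|(Mx)_j|^2] = \|Mx\|_2^2$ could be as small as... well, here we do \emph{not} need the sparsity lower bound $\|Mx\|_2^2 \ge 1/k$ — that enters only when deriving the RIP corollary from this theorem — so the additive form with $\eta\|x\|_1^2\|M\|_\infty^2$ is exactly what the telescoping of the scales yields.

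\textbf{The main obstacle} I anticipate is step (3), the decomposition: transferring a good approximation of the \emph{complex vector} $Mx$ (which is what Maurey's method naturally gives) into a good \emph{scale-by-scale} approximation of the \emph{real vector} $|Mx|^2$, while keeping the net at scale $i$ small enough. The difficulty is that a net element $h^{(i)}$ produced by averaging $m_i$ columns is only guaranteed to be close to $Mx$ on \emph{most} coordinates, not all, and on the coordinates where it is the "responsible" scale one needs both $|h^{(i)}_j - (Mx)_j|$ small \emph{and} $|h^{(i)}_j|$ itself of the right order $2^{-i/2}/\sqrt N$ so that $||h^{(i)}_j|^2 - |(Mx)_j|^2| \le 2|h^{(i)}_j|\cdot|h^{(i)}_j-(Mx)_j| + |h^{(i)}_j-(Mx)_j|^2$ lands inside the error budget. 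Getting this to hold for \emph{all} $x$ simultaneously (so that $h^{(i)}$ can be quantified as an element of a \emph{fixed} finite net, not depending on $x$) is precisely the subtle point, and it is what forces the particular choice $m_i \sim \eps^{-2}2^i\log N$ and the particular relationship between the additive error at scale $i$ and $|\calH_i|$. The rest — the Chernoff union bound in step (4) and the arithmetic of $q$ — is routine once the scales are set up correctly.
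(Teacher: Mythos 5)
Your high-level plan is the same as the paper's (Maurey-type nets $\calG_i$ at geometrically decreasing scales, a scale-by-scale decomposition of $|Mx|^2$, Chernoff plus a union bound over each net), but the quantitative bookkeeping as you have set it up does not deliver the stated bound, and the two places where it fails are exactly the two non-routine points of the proof. First, you take $m_i = O(\eps^{-2} 2^i \log N)$ Maurey samples, so $\log|\calH_i| = \Theta(\eps^{-2} 2^i \log^2 N)$; then the union bound at scale $i$ forces the per-vector failure exponent to exceed $\eps^{-2}2^i\log^2 N$, and no matter how you run the concentration step this pushes $q$ up to $\Omega(\eps^{-3}\eta^{-1}\log^2 N \cdot \log(1/\eta))$ rather than the claimed $O(\eps^{-3}\eta^{-1}\log N\cdot \log^2(1/\eta))$ --- i.e.\ you lose precisely the $\log N \to \log(1/\eta)$ improvement that this theorem is about. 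The fix is that the net element at scale $i$ only needs to approximate $Mx$ on all but a $\gamma \approx \eta/\log(1/\eta)$ fraction of the coordinates of $[N]$ \emph{and of $Q$} (this is Lemma~\ref{lemma:approx}: existence via Markov for a fixed $x$ and fixed $Q$, with no union bound over coordinates), so $O(\eps^{-2}2^i\log(1/\gamma))$ samples suffice and $\log|\calH_i| = O(\eps^{-2}2^i\log N\log(1/\eta))$. You even observe that only a ``most coordinates'' guarantee is available, but you do not exploit it to shrink the net, and you do not say how the resulting bad coordinates are absorbed; in the paper this requires capping the scale-$i$ vectors at $O(2^{-i})$ so that bad coordinates cost only $O((t{+}r)\gamma)=O(\eta)$ in the expectations.

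Second, your concentration step invokes the purely additive bound (Corollary~\ref{thm:Chernoff1}), whose exponent is quadratic in (error)/(range). Since each scale must be allotted an additive slack of order $\eta/\log(1/\eta)$ (the per-scale means can be arbitrarily small and $t$ scales must sum to total additive error $O(\eta)$), the quadratic form gives an exponent of order $q\,\eta^2 2^{2i}/\log^2(1/\eta)$, and at the coarse scales ($i$ small) beating $\log|\calH_i|$ then forces an $\eta^{-2}$ (for the RIP application, a $k^2$) into $q$; your asserted exponent $2^{-\Omega(q\eps^2 2^i/(\cdots))}$ is not what this corollary yields. What makes the claimed $q$ work is the mixed relative-plus-additive bound (Corollary~\ref{cor:ChernoffCombined}), applied with relative error $\eps$, additive slack $\eta/(2t)$, and range $a = O(2^{-i})$, giving exponent $\Omega(q\,\eps\,(\eta/t)\,2^i)$ --- linear in $\eps$ (whence $\eps^{-3}$, not $\eps^{-4}$) and carrying the factor $2^i$ that cancels against the net size. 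Finally, the issue you flag as the main obstacle --- making the scale assignment independent of $x$ --- is resolved in the paper by defining the responsible scale from the net vectors themselves ($B_i$ is the set of $j$ for which $i$ is the smallest index with $|g^{(i+r)}_j| \ge 2\cdot 2^{-i/2}$) and setting $h^{(i)}_j = \min(|g^{(i+r)}_j|^2\mathbb{1}_{j\in B_i},\, 9\cdot 2^{-i})$, so that $h^{(i)}$ is a function of the tuple of net elements only, is bounded by $O(2^{-i})$ even on bad coordinates, and on good coordinates of $B_i$ satisfies $|g^{(i+r)}_j| \in [2\cdot 2^{-i/2}, 3\cdot 2^{-i/2}]$, which is what controls the cross term when passing from $Mx$ to $|Mx|^2$. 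Without this device (or an equivalent one) your decomposition lemma is not proved, and with your choices of net size and Chernoff form the stated $q$ does not close the union bound.
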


Throughout the proof we assume without loss of generality that the matrix $M \in \C^{N \times N}$ satisfies $\|M\|_\infty = 1$.
For $\eps, \eta >0$, we denote $t = \log_2(1/\eta)$, $r = \log_2(1/\eps^2)$, and $\gamma = \eta/(2t)$. We start by defining several vector sets as follows.

\paragraph{The Vector Sets $\calG_i$.}
For every $1 \leq i \leq t+r$, let $\calG_i$ denote the set of all vectors $g^{(i)} \in \C^N$ that can be represented as
\begin{equation}\label{eq:vectorg}
g^{(i)} = \frac{\sqrt{2}}{|F|} \cdot \sum_{(\ell,s) \in F}{(-1)^{s/2} \cdot M^{(\ell)}}
\end{equation}
for a multiset $F$ of $O(2^i \cdot \log(1/\gamma))$ pairs in $[N] \times \{0,1,2,3\}$.
A trivial counting argument gives the following.

\begin{claim}\label{claim:sizeGi}
For every $1 \leq i \leq t+r$, $|\calG_i| \leq N^{O (2^i \cdot \log (1/\gamma))}.$
\end{claim}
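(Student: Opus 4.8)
The plan is to prove Claim~\ref{claim:sizeGi} by a direct counting argument, bounding the number of multisets $F$ that can appear in the representation~\eqref{eq:vectorg} and using the fact that the map $F \mapsto g^{(i)}$ is surjective onto $\calG_i$ by definition. First I would fix the constant hidden in the $O(\cdot)$ notation, so that a valid multiset $F$ consists of exactly $m_i$ pairs, where $m_i = O(2^i \cdot \log(1/\gamma))$; actually, since $\calG_i$ is defined as the set of vectors representable using a multiset of that size, it suffices to count multisets of size \emph{at most} $m_i$, and then sum over sizes (or, even more simply, note that padding $F$ with canceling pairs lets us assume size exactly $m_i$ without loss of generality, or just bound by $(m_i+1)$ times the count for the top size).

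The core estimate is the number of multisets of size $m$ drawn from the ground set $[N] \times \{0,1,2,3\}$, which has cardinality $4N$. The number of such multisets is $\binom{4N + m - 1}{m}$, and using the standard bound $\binom{a+m-1}{m} \leq (a+m-1)^m \leq (4N)^m$ for $m \le 4N$ (or more crudely $\binom{n}{m}\le n^m$ together with $4N+m-1 \le N^{O(1)}$ for the relevant range of $m$), this is at most $N^{O(m)}$. Since each such multiset gives rise to at most one vector $g^{(i)}$ via~\eqref{eq:vectorg}, we get $|\calG_i| \le N^{O(m_i)} = N^{O(2^i \cdot \log(1/\gamma))}$, which is exactly the claimed bound. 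I would also remark that the factor $4$ and the extra factors from the distinct sizes $0,1,\dots,m_i$ and the $\sqrt{2}/|F|$ scaling are all absorbed into the $O(\cdot)$ in the exponent, since $\log(4N^{c}) = O(\log N)$.

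There is essentially no obstacle here — this is a ``trivial counting argument'' as the paper says — but the one point worth being slightly careful about is which quantities are allowed to be absorbed into the $O(\cdot)$: the exponent $O(2^i \log(1/\gamma))$ already has an unspecified constant, so the move from ``number of pairs'' to ``number of multisets'' (squaring in the exponent base, i.e., $(4N)^{m_i}$ versus $N^{m_i}$) is harmless, as is the sum over $m \le m_i$ of such terms, which only multiplies by $m_i + 1 = N^{o(1)}$. So the proof is just: (i) a valid $F$ has $|F| \le m_i$ for $m_i = O(2^i \log(1/\gamma))$; (ii) the number of multisets of size at most $m_i$ over an alphabet of size $4N$ is at most $(m_i+1)\binom{4N+m_i-1}{m_i} \le N^{O(m_i)}$; (iii) the representation map is onto $\calG_i$, so $|\calG_i| \le N^{O(m_i)} = N^{O(2^i \log(1/\gamma))}$.
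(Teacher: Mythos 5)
Your proposal is correct and coincides with what the paper intends: the paper offers no written proof beyond calling it ``a trivial counting argument,'' and that argument is exactly your count of multisets of size $O(2^i \cdot \log(1/\gamma))$ over the ground set $[N] \times \{0,1,2,3\}$ of size $4N$, giving at most $(4N)^{O(2^i \log(1/\gamma))} = N^{O(2^i \log(1/\gamma))}$ vectors since each multiset $F$ determines $g^{(i)}$ via~\eqref{eq:vectorg}. Your care about absorbing the factor $4$, the sum over sizes, and the $\sqrt{2}/|F|$ normalization into the $O(\cdot)$ in the exponent is appropriate and harmless.
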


\paragraph{The Vector Sets $\calH_i$.}
For a $t$-tuple of vectors $(g^{(1+r)},\ldots,g^{(t+r)}) \in \calG_{1+r} \times \cdots \times \calG_{t+r}$ and for $1 \leq i \leq t$, let $B_i$ be the set of all $j \in [N]$ for which $i$ is the smallest index satisfying $|g^{(i+r)}_j| \geq 2 \cdot 2^{-i/2}$. For such $i$, define the vector $h^{(i)}$ by
\begin{align}\label{eq:H_i}
h^{(i)}_j = \min (|g^{(i+r)}_j|^2 \cdot \mathbb{1}_{j \in B_i}, 9 \cdot 2^{-i}).
\end{align}
Let $\calH_i$ be the set of all vectors $h^{(i)}$ that can be obtained in this way.

\begin{claim}\label{claim:sizeHi}
For every $1 \leq i \leq t$, $|\calH_i| \leq N^{O (\eps^{-2} \cdot 2^i \cdot \log (1/\gamma))}.$
\end{claim}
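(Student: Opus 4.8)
The plan is to bound the number of distinct vectors $h^{(i)}$ by bounding the number of "meaningful" choices that produce them. By definition, $h^{(i)}$ is determined by the tuple $(g^{(1+r)},\ldots,g^{(t+r)})$, but $h^{(i)}$ only depends on $g^{(i+r)}$ and on the sets $B_1,\ldots,B_i$ (which carve out exactly the coordinates where index $i$ "wins"). The sets $B_1,\ldots,B_{i-1}$ enter only to exclude coordinates, and each $B_j$ is itself determined by $g^{(j+r)}$. So a priori $h^{(i)}$ is a function of $g^{(1+r)},\ldots,g^{(i+r)}$, which by Claim~\ref{claim:sizeGi} would give a bound of $\prod_{j=1}^{i} N^{O(2^{j+r}\log(1/\gamma))} = N^{O(2^{i+r}\log(1/\gamma))} = N^{O(\eps^{-2}2^i\log(1/\gamma))}$, using $2^{r}=\Theta(1/\eps^2)$ and the geometric sum. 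That already matches the claimed bound, so the argument is essentially a direct counting estimate.

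Concretely, the steps I would carry out are: (i) observe that fixing the $i$-tuple $(g^{(1+r)},\ldots,g^{(i+r)}) \in \calG_{1+r}\times\cdots\times\calG_{i+r}$ fixes each threshold set $\{j : |g^{(j+r)}_j| \ge 2\cdot 2^{-j/2}\}$ for $j \le i$, hence fixes $B_1,\ldots,B_i$, hence fixes $h^{(i)}$ via~\eqref{eq:H_i}; (ii) conclude $|\calH_i| \le \prod_{j=1}^{i} |\calG_{j+r}|$; (iii) plug in Claim~\ref{claim:sizeGi} to get $|\calH_i| \le \prod_{j=1}^{i} N^{O(2^{j+r}\log(1/\gamma))}$; (iv) evaluate the exponent: $\sum_{j=1}^i 2^{j+r} = 2^{r}(2^{i+1}-2) \le 2^{r+i+1} = O(\eps^{-2}\cdot 2^i)$ since $2^{r} = 1/\eps^2$, so the total exponent is $O(\eps^{-2}\cdot 2^i \cdot \log(1/\gamma))$, as required.

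I do not expect a genuine obstacle here; the only thing to be slightly careful about is the bookkeeping in step (iv), namely that summing the $|\calG_{j+r}|$ exponents over $j=1,\ldots,i$ is dominated (up to a constant factor) by the top term $j=i$ because the exponents grow geometrically in $j$, and that the $\log(1/\gamma)$ factor is common to all of them and factors out. One should also note that the definition only requires $h^{(i)}$ to be \emph{obtainable} from some tuple, so counting tuples is an overcount in the right direction. Hence the bound $|\calH_i| \le N^{O(\eps^{-2}\cdot 2^i\cdot \log(1/\gamma))}$ follows.
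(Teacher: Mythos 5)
Your argument is correct and is essentially the paper's own proof: both observe that $h^{(i)}$ is determined by the tuple $(g^{(1+r)},\ldots,g^{(i+r)})$, bound $|\calH_i|$ by the product of the $|\calG_{j+r}|$ via Claim~\ref{claim:sizeGi}, and sum the geometric exponents to get $O(\log(1/\gamma))\cdot 2^{i+r+1} = O(\eps^{-2}\cdot 2^i \cdot \log(1/\gamma))$ using $2^r = 1/\eps^2$. No gaps.
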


\begin{proof}
Observe that every $h^{(i)} \in \calH_i$ is fully defined by some $(g^{(1+r)},\ldots,g^{(i+r)}) \in \calG_{1+r} \times \cdots \times \calG_{i+r}$. Hence
\begin{align*}
|\calH_i| \leq
|\calG_{1+r}| \cdots |\calG_{i+r}|
 \leq
N^{O (\log (1/\gamma)) \cdot (2^{1+r}+2^{2+r}+\cdots +2^{i+r})}
 \leq
N^{O (\log (1/\gamma)) \cdot 2^{i+r+1}} \; .
\end{align*}
Using the definition of $r$, the claim follows.
\end{proof}

\begin{lemma}\label{lemma:QisGood}
For every $\tilde{\eta} > 0$ and some $q = O(\eps^{-3} \tilde{\eta}^{-1} \log N \cdot \log(1/\gamma))$, let $Q$ be a multiset of $q$ uniform and independent random elements of $[N]$.
Then, with probability $1- 2^{-\Omega(\eps^{-2} \cdot \log N \cdot \log (1/\gamma))}$, it holds that for all $1 \leq i \leq t$ and $h^{(i)} \in \calH_i$ ,
\[\expectation_{j \in Q} \bracks[\big]{h^{(i)}_j} \approx_{\eps, \tilde{\eta}} \expectation_{j \in [N]}\bracks[\big]{h^{(i)}_j}.\]
\end{lemma}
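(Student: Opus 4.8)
The plan is to apply the Chernoff-Hoeffding bound (in its additive/multiplicative-plus-additive form, Corollary~\ref{cor:ChernoffCombined}) to each fixed vector $h^{(i)} \in \calH_i$ separately, and then take a union bound over all $i$ and all $h^{(i)} \in \calH_i$ using the cardinality estimate of Claim~\ref{claim:sizeHi}. Fix $i$ and $h^{(i)} \in \calH_i$. For $j$ chosen uniformly at random from $[N]$, set $X_j = h^{(i)}_j$; these are i.i.d.\ random variables indexed by the $q$ samples in $Q$, with common mean $\mu_i := \expectation_{j \in [N]}[h^{(i)}_j]$. The key structural fact is that by the definition \eqref{eq:H_i}, every coordinate of $h^{(i)}$ lies in $[0, 9 \cdot 2^{-i}]$, so we may take $a = 9 \cdot 2^{-i}$ in Corollary~\ref{cor:ChernoffCombined}. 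Applying that corollary with the target error $\approx_{\eps, \tilde\eta}$ (i.e.\ multiplicative $\eps$ plus additive $\tilde\eta$) gives that $\expectation_{j \in Q}[h^{(i)}_j] \approx_{\eps,\tilde\eta} \mu_i$ fails with probability at most $2 e^{-C q \tilde\eta \eps / a} = 2 e^{-C q \tilde\eta \eps \, 2^i / 9}$.

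Next I would combine this with the counting bound. By Claim~\ref{claim:sizeHi}, $|\calH_i| \le N^{O(\eps^{-2} \cdot 2^i \cdot \log(1/\gamma))}$, so the union-bound failure probability over all $h^{(i)} \in \calH_i$ is at most
\[
N^{O(\eps^{-2} \cdot 2^i \cdot \log(1/\gamma))} \cdot 2 e^{-C q \tilde\eta \eps \, 2^i / 9}
= 2 \exp\!\left( O(\eps^{-2} 2^i \log(1/\gamma) \log N) - \tfrac{C}{9} q \tilde\eta \eps \, 2^i \right).
\]
The point is that the factor $2^i$ appears in the exponent of \emph{both} the log-cardinality term and the Chernoff savings term, so it cancels in the comparison: choosing $q = O(\eps^{-3} \tilde\eta^{-1} \log N \cdot \log(1/\gamma))$ with a large enough implicit constant makes the Chernoff exponent dominate by a constant factor, leaving a per-$i$ failure probability of $2^{-\Omega(\eps^{-2} \cdot 2^i \cdot \log N \cdot \log(1/\gamma))}$. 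Finally, sum over $i = 1, \ldots, t$: since the bound for $i=1$ already reads $2^{-\Omega(\eps^{-2} \log N \log(1/\gamma))}$ and the exponents only grow (geometrically) in $i$, the geometric series is dominated by its first term, yielding a total failure probability of $2^{-\Omega(\eps^{-2} \cdot \log N \cdot \log(1/\gamma))}$, exactly as claimed.

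The only genuinely delicate point — and the step I would be most careful about — is verifying that the exponents balance with the \emph{right} constants: that the $q$ stated in the lemma is large enough that $\tfrac{C}{9} q \tilde\eta \eps \, 2^i$ exceeds, say, twice the $O(\eps^{-2} 2^i \log(1/\gamma)\log N)$ term coming from $\log|\calH_i|$, so that the surviving exponent is still $\Omega(\eps^{-2} 2^i \log(1/\gamma)\log N)$ and in particular $\Omega(\eps^{-2}\log(1/\gamma)\log N)$ for every $i\ge 1$. This is a matter of plugging $q$ into the inequality and matching the implied constants; the cancellation of $2^i$ is what makes it work and is the conceptual crux noted in the proof overview. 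Everything else (the $[0,9\cdot 2^{-i}]$ range, applying Corollary~\ref{cor:ChernoffCombined}, and summing the geometric series over $i$) is routine.
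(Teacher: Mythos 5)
Your proposal is correct and follows essentially the same route as the paper: apply Corollary~\ref{cor:ChernoffCombined} with $\alpha=\tilde\eta$ and $a=9\cdot 2^{-i}$ to each fixed $h^{(i)}$, union bound over $\calH_i$ via Claim~\ref{claim:sizeHi} using the cancellation of the $2^i$ factor, and then union bound (equivalently, sum the geometrically decaying terms) over $i$. The constant-balancing step you flag as delicate is exactly what the paper's choice of $q$ absorbs into the implicit constants, so there is no gap.
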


\begin{proof}
Fix an $1 \leq i \leq t$ and a vector $h^{(i)} \in \calH_i$, and denote
$\mu = \expectation_{j \in [N]}[h^{(i)}_j]$. By Corollary~\ref{cor:ChernoffCombined},
applied with $\alpha = \tilde{\eta}$ and $a = 9 \cdot 2^{-i}$ (recall that $h^{(i)}_j \leq a$ for every $j$), with probability $1-2^{-\Omega(2^i \cdot q\eps \tilde{\eta})}$, it holds that $\expectation_{j \in Q}[h^{(i)}_j] \approx_{\eps,\tilde{\eta}} \mu$.
Using Claim~\ref{claim:sizeHi}, the union bound over all
the vectors in $\calH_i$ implies that the probability
that some $h^{(i)} \in \calH_i$ does not satisfy $\expectation_{j \in Q}[h^{(i)}_j] \approx_{\eps,\tilde{\eta}} \mu$ is at
most
\[
N^{O (\eps^{-2} \cdot 2^i \cdot \log (1/\gamma))} \cdot 2^{-\Omega (2^i \cdot q\eps \tilde{\eta})} \leq 2^{-\Omega(\eps^{-2} \cdot 2^i \cdot \log N \cdot \log (1/\gamma))} \; .
\]
We complete the proof by a union bound over $i$.
\end{proof}

\paragraph{Approximating the Vectors $Mx$.}

\begin{lemma}\label{lemma:approx}
For every vector $x \in \C^N$ with $\|x\|_1 = 1$,
every multiset $Q \subseteq [N]$, and every $1 \leq i \leq t+r$, there exists a vector $g \in \calG_i$ that
satisfies $|(Mx)_j| \approx_{0, 2^{-i/2}} |g_j|$
for all but at most $\gamma$ fraction of $j \in [N]$ and for all but at most $\gamma$ fraction of $j \in Q$.
\end{lemma}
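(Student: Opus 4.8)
The approach is Maurey's empirical method, as sketched in Section~\ref{sec:overview}: we realize $Mx$ as a scaled expectation of a random, phase‑twisted column of $M$, so that the average of $m=\Theta(2^i\log(1/\gamma))$ independent samples from the underlying distribution is, by the very definition~\eqref{eq:vectorg}, a vector of $\calG_i$. A coordinate‑wise Chernoff bound then shows that for each fixed $j$ this average is within additive $2^{-i/2}$ of $|(Mx)_j|$ with probability $1-\gamma/16$, and a single averaging argument over $j$ — run simultaneously over $[N]$ and over the multiset $Q$ — produces one sample that is good on all but a $\gamma$ fraction of each.

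\emph{Representing $Mx$ as an expectation.} Write $x_\ell = u_\ell + (-1)^{1/2} v_\ell$ with $u_\ell,v_\ell\in\R$, and for $z\in\R$ put $z^+=\max(z,0)$, $z^-=\max(-z,0)$. For a parameter $\delta\ge 0$ common to all $\ell$, set $c_{\ell,0}=u_\ell^++\delta$, $c_{\ell,1}=v_\ell^++\delta$, $c_{\ell,2}=u_\ell^-+\delta$, $c_{\ell,3}=v_\ell^-+\delta$. These are nonnegative and satisfy $\sum_{s=0}^{3}c_{\ell,s}(-1)^{s/2}=(u_\ell^+-u_\ell^-)+(-1)^{1/2}(v_\ell^+-v_\ell^-)=x_\ell$, the $\delta$'s cancelling. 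Since $\sum_\ell(u_\ell^++v_\ell^++u_\ell^-+v_\ell^-)=\sum_\ell(|u_\ell|+|v_\ell|)\le\sqrt 2\,\|x\|_1=\sqrt 2$, we may choose $\delta$ so that $\sum_{\ell,s}c_{\ell,s}=\sqrt 2$ exactly. Then $p_{\ell,s}=c_{\ell,s}/\sqrt 2$ is a probability distribution on $[N]\times\{0,1,2,3\}$ and
\[
Mx=\sum_{\ell,s}c_{\ell,s}(-1)^{s/2}M^{(\ell)}=\sqrt 2\cdot\expectation_{(\ell,s)\sim p}\bigl[(-1)^{s/2}M^{(\ell)}\bigr],
\]
which is exactly what makes the normalization $\sqrt 2/|F|$ in~\eqref{eq:vectorg} the right one.

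\emph{Sampling and concentration.} Let $F$ be a multiset of $m=\Theta(2^i\log(1/\gamma))$ pairs drawn independently from $p$ (with the constant small enough to fit the bound in the definition of $\calG_i$), and let $g=\frac{\sqrt 2}{|F|}\sum_{(\ell,s)\in F}(-1)^{s/2}M^{(\ell)}\in\calG_i$. Fix $j\in[N]$. Then $g_j$ is the average of $m$ i.i.d.\ copies of the complex random variable $\sqrt 2\,(-1)^{s/2}M_{j,\ell}$ with $(\ell,s)\sim p$, which has absolute value at most $\sqrt 2$ (as $\|M\|_\infty=1$) and mean $(Mx)_j$. Corollary~\ref{cor:ChernoffComplex} with $a=\sqrt 2$ and $b=2^{-i/2}$ gives
\[
\Prob{F}{|g_j|\not\approx_{0,\,2^{-i/2}}|(Mx)_j|}\le 8\,e^{-C m 2^{-i}/2}\le \gamma/16,
\]
the last step holding once the constant hidden in $m$ is large enough. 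Letting $A_F$ and $B_F$ be the fractions of indices $j$ in $[N]$ and in the multiset $Q$, respectively, for which $|g_j|\not\approx_{0,2^{-i/2}}|(Mx)_j|$, linearity of expectation yields $\expectation_F[A_F+B_F]\le\gamma/8<\gamma$, so some realization of $F$ — equivalently some $g\in\calG_i$ — has $A_F+B_F<\gamma$, hence $A_F<\gamma$ and $B_F<\gamma$, which is the claim.

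\emph{Where the difficulty lies.} Essentially all the content is in the first step: decomposing $x$ over the fourth roots of unity with \emph{nonnegative} weights of total mass \emph{exactly} $\sqrt 2$ — this simultaneously calibrates the $\sqrt 2/|F|$ normalization of $\calG_i$ and keeps the per‑term bound in the Chernoff estimate equal to $\sqrt 2$. The remainder is a routine application of the complex Chernoff bound of Corollary~\ref{cor:ChernoffComplex}, together with the averaging trick of summing the two error fractions so that a single sampled $g$ serves for $[N]$ and for $Q$ at once.
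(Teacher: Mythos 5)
Your proposal is correct and follows essentially the same route as the paper: Maurey's empirical method with the same distribution over pairs $(\ell,s)\in[N]\times\{0,1,2,3\}$, the same application of Corollary~\ref{cor:ChernoffComplex} with $a=\sqrt 2$, $b=2^{-i/2}$, and an averaging argument to get one $g\in\calG_i$ good for both $[N]$ and $Q$. The only (harmless) cosmetic differences are that you make the construction of the distribution explicit via a global slack $\delta$ and replace the paper's two Markov bounds plus union bound by a single first-moment bound on $A_F+B_F$.
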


\begin{proof}
Observe that for every $\ell \in [N]$ there exist
$p_{\ell,0},p_{\ell,1},p_{\ell,2},p_{\ell,3} \ge 0$ that satisfy
\[ \sum_{s=0}^{3}{p_{\ell,s}} = |x_\ell| \mbox{ ~~~~and~~~~ }\sqrt{2} \cdot \sum_{s=0}^{3}{p_{\ell,s} \cdot (-1)^{s/2}} = x_\ell.\]
Notice that the assumption $\|x\|_1 = 1$ implies that the numbers $p_{\ell,s}$
form a probability distribution. Thus, the vector $Mx$ can be represented as
\[Mx = \sum_{\ell=1}^{N}{x_\ell \cdot M^{(\ell)}} = \sqrt{2} \cdot  \sum_{\ell=1}^{N} {\sum_{s=0}^{3}{p_{\ell,s} \cdot (-1)^{s/2} \cdot M^{(\ell)}}} = \expectation_{(\ell,s) \sim D} [\sqrt{2} \cdot (-1)^{s/2} \cdot M^{(\ell)} ],\]
where $D$ is the distribution that assigns probability $p_{\ell,s}$ to the pair $(\ell,s)$.

Let $F$ be a multiset of $O(2^i \cdot \log(1/\gamma))$ independent random samples from $D$, and let $g \in \calG_i$ be the vector
corresponding to $F$ as in~\eqref{eq:vectorg}.
By Corollary~\ref{cor:ChernoffComplex}, applied with $a = \sqrt{2}$ (recall that $\|M\|_\infty = 1$) and $b = 2^{-i/2}$, for every $j \in [N]$ the probability that
\begin{align}\label{eq:Mxj}
|(Mx)_j| \approx_{0, 2^{-i/2}} |g_j|
\end{align}
is at least $1-\gamma/4$. It follows that the expected number of $j \in [N]$ that do not satisfy~\eqref{eq:Mxj} is at most $\gamma N /4$, so by Markov's inequality the probability that the number of $j \in [N]$ that do not satisfy~\eqref{eq:Mxj} is at most $\gamma N$ is at least $3/4$. Similarly, the expected number of $j \in Q$ that do not satisfy~\eqref{eq:Mxj} is at most $\gamma |Q| /4$, so by Markov's inequality,
with probability at least $3/4$ it holds that
the number of $j \in Q$ that do not satisfy~\eqref{eq:Mxj} is at most $\gamma |Q|$.
It follows that there exists a vector $g \in \calG_i$ for which~\eqref{eq:Mxj} holds for all but at most $\gamma$ fraction of $j \in [N]$ and for all but at most $\gamma$ fraction of $j \in Q$, as required.
\end{proof}

\begin{lemma}\label{lemma:exists_gi}
For every multiset $Q \subseteq [N]$ and every vector $x \in \C^N$ with $\|x\|_1 = 1$ there exists a $t$-tuple of vectors $(h^{(1)},\ldots,h^{(t)}) \in \calH_1 \times \cdots \times \calH_t$ for which
\begin{enumerate}
  \item $\expectation_{j \in Q} \bracks[\big]{ |(Mx)_j|^2 } \approx_{O(\eps), O(\eta)}
	       \expectation_{j \in Q} \bracks[\big]{ \sum_{i=1}^{t}{h^{(i)}_j} }$
		    and
  \item $\expectation_{j \in [N]} \bracks[\big]{|(Mx)_j|^2 } \approx_{O(\eps), O(\eta)}
	       \expectation_{j \in [N]} \bracks[\big]{\sum_{i=1}^{t}{h^{(i)}_j} }$.
\end{enumerate}
\end{lemma}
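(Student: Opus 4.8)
The plan is to construct the tuple $(h^{(1)},\ldots,h^{(t)})$ explicitly from the vectors $g^{(1+r)},\ldots,g^{(t+r)}$ furnished by Lemma~\ref{lemma:approx}, and then to show that the approximation errors incurred at each of the three steps — (a) replacing each $|(Mx)_j|$ by $|g^{(i+r)}_j|$ on all but a $\gamma$-fraction of coordinates, (b) chopping $|Mx|^2$ into dyadic scales according to the sets $B_i$, and (c) truncating each scale at height $9\cdot 2^{-i}$ — telescope into a total additive error of $O(\eta)$ and a multiplicative error of $O(\eps)$, simultaneously over the averages on $Q$ and on $[N]$. First I would invoke Lemma~\ref{lemma:approx} for each $i$ in the range $1+r \le i+r \le t+r$ (a single multiset $Q$ suffices, since the lemma gives a $g$ that is simultaneously good on $[N]$ and on $Q$), obtaining $g^{(i+r)} \in \calG_{i+r}$ with $|(Mx)_j| \approx_{0,2^{-(i+r)/2}} |g^{(i+r)}_j|$ off a $\gamma$-fraction. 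Feeding this $t$-tuple into the definition~\eqref{eq:H_i} of the $B_i$ and the $h^{(i)}$ produces the candidate tuple.

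The core of the argument is the pointwise claim that, for every $j$ outside a small exceptional set, $|(Mx)_j|^2 \approx \sum_{i=1}^t h^{(i)}_j$ up to a small additive error depending on the scale. Here is the accounting I would carry out. Fix $j$ and suppose it is a ``good'' coordinate for all the relevant $g^{(i+r)}$'s (the bad ones will be handled separately). Let $i_0$ be the scale of $j$, i.e.\ the smallest $i$ with $|g^{(i+r)}_j| \ge 2\cdot 2^{-i/2}$, so $j \in B_{i_0}$ and $h^{(i)}_j = 0$ for $i \ne i_0$; thus $\sum_i h^{(i)}_j = \min(|g^{(i_0+r)}_j|^2, 9\cdot 2^{-i_0})$. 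On the one hand, since $j \notin B_i$ for $i < i_0$ we have $|g^{(i+r)}_j| < 2\cdot 2^{-i/2}$ for $i<i_0$, hence $|(Mx)_j| \le |g^{(i+r)}_j| + 2^{-(i+r)/2} < 3\cdot 2^{-i/2}$; taking $i = i_0-1$ this bounds $|(Mx)_j|^2 = O(2^{-i_0})$, which matches $\sum_i h^{(i)}_j$ up to the truncation constant and an $O(2^{-i_0})$ additive term. On the other hand, from $|(Mx)_j| \approx_{0,2^{-(i_0+r)/2}}|g^{(i_0+r)}_j|$ and $2^{-(i_0+r)/2} = \eps \cdot 2^{-i_0/2}$ (by $r = \log_2(1/\eps^2)$) together with $|g^{(i_0+r)}_j| \ge 2\cdot 2^{-i_0/2}$, one gets $|(Mx)_j|$ and $|g^{(i_0+r)}_j|$ agree up to a multiplicative $(1\pm O(\eps))$ factor, so their squares agree up to $(1\pm O(\eps))$; the truncation at $9\cdot 2^{-i_0}$ is then harmless because $|g^{(i_0+r)}_j|^2 \le (|(Mx)_j| + \eps 2^{-i_0/2})^2$ and, when $|(Mx)_j| \le 3\cdot 2^{-i_0/2}$, this is $< 9\cdot 2^{-i_0}$ — and if $|(Mx)_j| > 3\cdot 2^{-i_0/2}$ then $j$ would have been caught at scale $i_0$ in a way that needs the bad-set bookkeeping. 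Summing the per-coordinate additive slack $O(2^{-i_0(j)})$ over $j$ is where I must be careful: I would bound $\sum_{j} 2^{-i_0(j)}$ by relating $i_0(j)$ to the size of $|(Mx)_j|$ and using $\expectation_j[|(Mx)_j|^2] = \|Mx\|_2^2/N = \|x\|_2^2/N$ (via unitarity) — but the clean way is to observe that the additive error at scale $i$, per coordinate, is $O(2^{-i}) = O(\eta \cdot 2^{t-i})$ only for small $i$, and to note instead that coordinates with very small $|(Mx)_j|$ (below scale $t$) contribute at most $N \cdot 9\cdot 2^{-t} = 9\eta N$ in total to $\expectation_j[|(Mx)_j|^2]$ and are simply dropped, giving an $O(\eta)$ additive error outright, while at each surviving scale the multiplicative $(1\pm O(\eps))$ bound suffices.

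For the exceptional coordinates — those $j$ that are ``bad'' for some $g^{(i+r)}$ — there are at most $t\gamma N = (\eta/2)N$ of them in $[N]$ and at most $t\gamma|Q| = (\eta/2)|Q|$ of them in $Q$ by a union bound over the $t$ scales. On these I have no control over $|g^{(i+r)}_j|$, but $|(Mx)_j| \le \|x\|_1 \cdot \|M\|_\infty = 1$ so $|(Mx)_j|^2 \le 1$, and likewise $\sum_i h^{(i)}_j \le \max_i 9\cdot 2^{-i} \le 9/2$; hence the total contribution of the bad coordinates to either average is at most $\frac{\eta}{2}\cdot\frac{9}{2}\cdot\frac{1}{?}$... more precisely their contribution to $\expectation_{j\in[N]}[\cdot]$ is at most $(\eta/2)\cdot\max(1, 9/2) = O(\eta)$, and similarly for $Q$; these fold into the $O(\eta)$ additive term. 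Assembling the good-coordinate multiplicative bound with the bad-coordinate and small-scale additive bounds, and averaging over $j \in Q$ (resp.\ $j \in [N]$), yields exactly the two displayed approximations with constants $O(\eps)$ and $O(\eta)$.

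**The main obstacle** I expect is the bookkeeping in the good-coordinate analysis when $|(Mx)_j|$ sits near a scale boundary: the definition of $B_i$ uses $|g^{(i+r)}_j|$, not $|(Mx)_j|$, so a coordinate's true scale and its ``$g$-scale'' can differ by one, and I must check that the truncation constant $9\cdot 2^{-i}$ in~\eqref{eq:H_i} is generous enough to absorb this one-step slippage together with the $\eps\cdot 2^{-i/2}$ error from Lemma~\ref{lemma:approx}. The constants $2\cdot 2^{-i/2}$ (threshold) and $9\cdot 2^{-i}$ (cap) appear to have been chosen precisely so that $(2+1)^2 = 9$ and $(2-\text{small})^2$ stays comfortably inside, so the calculation should go through, but it requires laying out the cases $i < i_0$, $i = i_0$ explicitly and tracking whether $|(Mx)_j|$ exceeds $3\cdot 2^{-i_0/2}$ — the one regime where the naive bound fails and one must argue that such $j$ cannot be good at scale $i_0-1$ either, pushing $i_0$ down. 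Once that case analysis is pinned down, everything else is summation and union bounds of the kind already rehearsed in Lemmas~\ref{lemma:QisGood} and~\ref{lemma:approx}.
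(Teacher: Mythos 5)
Your proposal follows essentially the same route as the paper's proof: invoke Lemma~\ref{lemma:approx} at scales $1+r,\ldots,t+r$, split coordinates into good and bad, compare $|(Mx)_j|^2$ multiplicatively with $|g^{(i_0+r)}_j|^2$ for good coordinates in $B_{i_0}$, and charge the below-threshold and bad coordinates an additive $O(\eta)$ after averaging. The one point you left open --- whether a good $j\in B_{i_0}$ can have $|(Mx)_j|>3\cdot 2^{-i_0/2}$, so that the cap $9\cdot 2^{-i_0}$ bites --- is closed exactly as you guessed, but it needs the sharp constant: goodness at scale $i_0-1$ gives $|(Mx)_j|\le 2\cdot 2^{-(i_0-1)/2}+2^{-(i_0-1+r)/2}=(2\sqrt{2}+\sqrt{2}\,\eps)\,2^{-i_0/2}$ (not your lossy $3\cdot 2^{-(i_0-1)/2}$), whence $|g^{(i_0+r)}_j|\le 3\cdot 2^{-i_0/2}$ and $h^{(i_0)}_j=|g^{(i_0+r)}_j|^2$ is untruncated, as in~\eqref{eq:g_i+r}.
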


\begin{proof}
By Lemma~\ref{lemma:approx}, for every $1 \leq i \leq t$ there exists a vector $g^{(i+r)} \in \calG_{i+r}$ that satisfies
\begin{align}\label{eq:Mxj_2}
|(Mx)_j| \approx_{0, 2^{-(i+r)/2}} |g^{(i+r)}_j|
\end{align}
for all but at most $\gamma$ fraction of $j \in [N]$ and for all but at most $\gamma$ fraction of $j \in Q$.
We say that $j \in [N]$ is {\em good} if~\eqref{eq:Mxj_2} holds for every $1 \leq i \leq t$, and otherwise that it is {\em bad}.
Notice that all but at most $t \gamma$ fraction of $j \in [N]$ are good
and that all but at most $t \gamma$ fraction of $j \in Q$ are good.
Let $(h^{(1)},\ldots,h^{(t)})$ and $(B_1,\ldots,B_t)$ be the vectors and sets associated with $(g^{(1+r)}, \ldots, g^{(t+r)})$ as defined in~\eqref{eq:H_i}.
We claim that $h^{(1)},\ldots,h^{(t)}$ satisfy the requirements of the lemma.


We first show that
for every good $j$ it holds that
$|(Mx)_j|^2 \approx_{3\eps,9\eta} \sum_{i=1}^{t}{h^{(i)}_j}$.
To obtain it, we observe that if $j \in B_i$ for some $i$, then
\begin{align}\label{eq:g_i+r}
2 \cdot 2^{-i/2} \leq |g^{(i+r)}_j| \leq 3 \cdot 2^{-i/2}.
\end{align}
The lower bound follows simply from the definition of $B_i$. For the upper bound, which trivially holds for $i=1$, assume that $i \geq 2$, and notice that the definition of $B_i$ implies that
$|g^{(i+r-1)}_j| < 2 \cdot 2^{-(i-1)/2}$.
Using~\eqref{eq:Mxj_2}, and assuming that $\eps$ is sufficiently small, we obtain that
\begin{align*}
|g^{(i+r)}_j| &\leq |(Mx)_j| + 2^{-(i+r)/2} \leq |g^{(i+r-1)}_j| + 2^{-(i+r-1)/2} + 2^{-(i+r)/2} \\
&\leq  2^{-i/2}(2^{3/2} + 2^{1/2} \cdot \eps+\eps) \leq 3 \cdot 2^{-i/2}.
\end{align*}
Hence, by the upper bound in~\eqref{eq:g_i+r}, for a good $j \in B_i$ we have $h^{(i)}_j = |g^{(i+r)}_j|^2$ and $h^{(i')}_j = 0$ for $i' \neq i$. Observe that by the lower bound in~\eqref{eq:g_i+r},
\[|(Mx)_j| \in  [|g^{(i+r)}_j| - 2^{-(i+r)/2}, |g^{(i+r)}_j| + 2^{-(i+r)/2}] \subseteq [(1-\eps) \cdot |g^{(i+r)}_j| , (1+\eps) \cdot |g^{(i+r)}_j|], \]
and that this implies that $|(Mx)_j|^2 \approx_{3\eps,0} \sum_{i=1}^{t}{h^{(i)}_j}$. On the other hand, in case that $j$ is good but does not belong to any $B_i$, recalling that $t = \log_2(1/\eta)$, it follows that
\[|(Mx)_j| \leq |g^{(t+r)}_j| + 2^{-(t+r)/2} \leq 2 \cdot 2^{-t/2} + 2^{-(t+r)/2} \leq 3 \cdot 2^{-t/2} \leq 3\sqrt{\eta},\]
and thus $|(Mx)_j|^2 \approx_{0,9\eta} 0 = \sum_{i=1}^{t}{h^{(i)}_j}$.

Finally, for every bad $j$ we have
\[
\abs[\Big]{|(Mx)_j|^2 - \sum_{i=1}^{t}{h^{(i)}_j}}
\le
\max \Big ( |(Mx)_j|^2, \sum_{i=1}^{t}{h^{(i)}_j} \Big ) \leq 2.
\]
Since at most $t\gamma$ fraction of the elements in $[N]$ and in $Q$ are bad, their effect on the difference between the expectations in the lemma can be bounded by $2t \gamma$. By our choice of $\gamma$, this is $\eta$, completing the proof of the lemma.
\end{proof}

Finally, we are ready to prove Theorem~\ref{thm:Main_s}.

\begin{proof}[ of Theorem~\ref{thm:Main_s}]
By Lemma~\ref{lemma:QisGood}, applied with $\tilde{\eta} = \eta/(2t)$,
a random multiset $Q$ of size
\begin{align*}
q = O \Big (\eps^{-3} \eta^{-1} \cdot t \cdot \log N \cdot \log(1/\gamma) \Big)
 = O \Big (\eps^{-3} \eta^{-1} \log N \cdot \log^2(1/\eta) \Big)
\end{align*}
satisfies with probability $1- 2^{- \Omega(\eps^{-2} \cdot \log N \cdot \log (1/\eta))}$
that for all $1 \leq i \leq t$ and $h^{(i)} \in \calH_i$,
\begin{align*}
\expectation_{j \in Q} \bracks[\big]{h^{(i)}_j} \approx_{\eps, \eta/t} \expectation_{j \in [N]}\bracks[\big]{h^{(i)}_j} \; ,
\end{align*}
in which case we also have
\[
\expectation_{j \in Q} \bracks[\Big]{\sum_{i=1}^{t}{h^{(i)}_j}} \approx_{\eps, \eta} \expectation_{j \in [N]}\bracks[\Big]{\sum_{i=1}^{t}{h^{(i)}_j}}.
\]

We show that a $Q$ with the above property satisfies the requirement of the theorem. Let $x \in \C^N$ be a vector, and assume without loss of generality that $\|x\|_1 = 1$.
By Lemma~\ref{lemma:exists_gi}, there exists a $t$-tuple of vectors $(h^{(1)},\ldots,h^{(t)}) \in \calH_1 \times \cdots \times \calH_t$
satisfying Items 1 and 2 there.
As a result,
\[
\expectation_{j \in Q} \bracks[\big]{|(Mx)_j|^2}
\approx_{O(\eps), O(\eta)}
\expectation_{j \in [N]}\bracks[\big]{|(Mx)_j|^2} \; ,\]
and we are done.
\end{proof}

\subsection{The Restricted Isometry Property}
Equipped with Theorem~\ref{thm:Main_s}, it is easy to derive our result on the restricted isometry property (see Definition~\ref{def:RIP}) of random sub-matrices of unitary matrices.

\begin{theorem}\label{thm:RIP_simple}
For sufficiently large $N$ and $k$, a unitary matrix $M \in \C^{N \times N}$
satisfying $\|M\|_\infty \leq O(1/\sqrt{N})$, and a sufficiently small
$\eps > 0$, the following holds.
For some $q = O(\eps^{-4}  \cdot k \cdot \log^2(k/\eps) \cdot \log N)$,
let $A \in \C^{q \times N}$ be a matrix whose $q$ rows are chosen
uniformly and independently from the rows of $M$,
multiplied by $\sqrt{N/q}$.
Then, with probability
$1-2^{-\Omega(\eps^{-2} \cdot \log N \cdot \log (k/\eps))}$, the matrix $A$ satisfies the restricted isometry property of order $k$ with constant $\eps$.
\end{theorem}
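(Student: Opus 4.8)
The plan is to derive Theorem~\ref{thm:RIP_simple} from Theorem~\ref{thm:Main_s} by a straightforward reduction, essentially just matching up parameters and using the sparsity hypothesis in the one place the proof overview indicates. First I would note that since $M$ is unitary, for every $x \in \C^N$ we have $\|Mx\|_2^2 = \|x\|_2^2$, so the desired conclusion $(1-\eps)\|x\|_2^2 \le \|Ax\|_2^2 \le (1+\eps)\|x\|_2^2$ is equivalent to $\|Ax\|_2^2 \approx_{\eps,0} \|Mx\|_2^2$. Next, observe that the rows of $A$ are the rows of $M$ indexed by a random multiset $Q$ of size $q$, each scaled by $\sqrt{N/q}$; hence $\|Ax\|_2^2 = \frac{N}{q}\sum_{j \in Q}|(Mx)_j|^2 = N \cdot \expectation_{j \in Q}[|(Mx)_j|^2]$, while $\|Mx\|_2^2 = \sum_{j\in[N]}|(Mx)_j|^2 = N \cdot \expectation_{j\in[N]}[|(Mx)_j|^2]$. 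So after dividing by $N$, the RIP condition for a single $x$ becomes exactly the conclusion of Theorem~\ref{thm:Main_s}, provided the additive slack there is small enough.

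The crux is choosing $\eta$ so that the additive error term $\eta \cdot \|x\|_1^2 \cdot \|M\|_\infty^2$ in Theorem~\ref{thm:Main_s} is absorbed into the multiplicative $\eps$ error. Here is where sparsity enters, and I would follow the overview: rescale so $\|x\|_1 = 1$ (the statement is homogeneous of degree $2$ in $x$, so this is WLOG). Then for a $k$-sparse vector, Cauchy--Schwarz gives $\|x\|_2^2 \ge \|x\|_1^2/k = 1/k$, and since $M$ is unitary $\expectation_{j\in[N]}[|(Mx)_j|^2] = \|x\|_2^2/N \ge 1/(kN)$. With $\|M\|_\infty^2 \le O(1/N)$, the additive term is $O(\eta/N)$. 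To make this at most, say, $\eps$ times the true average $\|x\|_2^2/N \ge 1/(kN)$, it suffices to take $\eta = \Theta(\eps/k)$. Plugging $\eta = \Theta(\eps/k)$ into the bound $q = O(\eps^{-3}\eta^{-1}\log N \cdot \log^2(1/\eta))$ from Theorem~\ref{thm:Main_s} yields $q = O(\eps^{-4} \cdot k \cdot \log^2(k/\eps) \cdot \log N)$, matching the claimed bound, and the failure probability $2^{-\Omega(\eps^{-2}\log N\log(1/\eta))}$ becomes $2^{-\Omega(\eps^{-2}\log N \log(k/\eps))}$, also as claimed. One should double-check that $\eta$ is ``sufficiently small'' as required by Theorem~\ref{thm:Main_s}, which holds for $k$ large (or by adjusting constants).

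The one subtlety — and the step I expect to require the most care rather than being purely mechanical — is the quantifier order and the uniformity over all $k$-sparse $x$. Theorem~\ref{thm:Main_s} already gives, with the stated high probability, a single event on $Q$ under which the approximation holds \emph{simultaneously for every} $x \in \C^N$; so there is no additional union bound over sparse supports to perform here, which is precisely the advantage of the ``net + chaining'' formulation. I would just make explicit that we invoke Theorem~\ref{thm:Main_s} once with $\eta = \Theta(\eps/k)$, fix a good $Q$, and then for an arbitrary $k$-sparse $x$ normalize to $\|x\|_1 = 1$, apply the conclusion, and convert the $\approx_{\eps, O(\eta/N)}$ statement on the averages back to the $\approx_{\eps,0}$ statement on $\|Ax\|_2^2$ versus $\|x\|_2^2$ using the lower bound $\|x\|_2^2 \ge 1/k$ to swallow the additive term into a slightly larger constant in front of $\eps$ (then rescale $\eps$ by a constant at the outset). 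This last constant-chasing is routine; the conceptual content is entirely in the reduction above and in Theorem~\ref{thm:Main_s} itself.
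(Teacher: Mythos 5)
Your proposal is correct and follows essentially the same route as the paper: invoke Theorem~\ref{thm:Main_s} once with $\eta = \Theta(\eps/k)$ (and $\eps$ rescaled by a constant), use Cauchy--Schwarz to relate $\|x\|_1$ and $\|x\|_2$ for $k$-sparse $x$, and note that the ``for all $x$'' quantifier is already built into Theorem~\ref{thm:Main_s}. The only cosmetic difference is that you normalize $\|x\|_1=1$ and absorb the additive term relative to $\|x\|_2^2 \ge 1/k$, whereas the paper normalizes $\|x\|_2=1$ and bounds the additive term by $\eps/2$ directly; these are interchangeable by homogeneity.
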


\begin{proof}
Let $Q$ be a multiset of $q$ uniform and independent random elements of $[N]$,
defining a matrix $A$ as above.
Notice that by the Cauchy-Schwarz inequality,
any $k$-sparse vector $x \in \C^N$ with $\|x\|_2=1$ satisfies $\|x\|_1 \le \sqrt{k}$.
Applying Theorem~\ref{thm:Main_s} with $\eps/2$ and some $\eta = \Omega(\eps/k)$,
we get that with probability $1-2^{-\Omega(\eps^{-2} \cdot \log N \cdot \log (k/\eps))}$,
it holds that for every $x \in \C^N$ with
$\|x\|_2 = 1$,
\[
\|Ax\|_2^2 =
N \cdot \expectation_{j \in Q} \bracks[\big]{|(Mx)_j|^2}
\approx_{\eps/2, \eps/2}
N \cdot \expectation_{j \in [N]}\bracks[\big]{|(Mx)_j|^2}
=
\|Mx\|_2^2 = 1
\; .\]
It follows that every vector $x \in \C^N$ satisfies $\|Ax\|_2^2 \approx_{\eps, 0} \|x\|_2^2$, hence $A$ satisfies the restricted isometry property of order $k$ with constant $\eps$.
\end{proof}

\section{The Improved Analysis}\label{sec:improved}

In this section we prove the following theorem, which improves the bound of Theorem~\ref{thm:Main_s} in terms of the dependence on $\eps$.

\begin{theorem}\label{thm:Main_s_}
For a sufficiently large $N$, a matrix $M \in \C^{N \times N}$, and sufficiently small $\eps, \eta > 0$, the following holds.
For some
$q = O (\log^2(1/\eps) \cdot \eps^{-1} \eta^{-1} \log N \cdot \log^2 (1/\eta) )$,
let $Q$ be a multiset of $q$ uniform and independent random elements of $[N]$.
Then, with probability $1- 2^{-\Omega(\log N \cdot \log (1/\eta))}$, it holds that for every $x \in \C^N$,
\begin{align}\label{eq:thm_improved}
\expectation_{j \in Q} \bracks[\big]{|(Mx)_j|^2} \approx_{\eps, \eta \cdot \|x\|_1^2 \cdot \|M\|_\infty^2} \expectation_{j \in [N]}\bracks[\big]{|(Mx)_j|^2}.
\end{align}
\end{theorem}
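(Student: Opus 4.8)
The plan is to mimic the structure of the simpler analysis in Section~\ref{sec:simple}, but to be more careful about how the scales are decomposed, so that the factor of $\eps^{-2}$ coming from the size of $\calH_i$ (see Claim~\ref{claim:sizeHi}) is reduced to a polylogarithmic factor in $1/\eps$. The source of the $\eps^{-4}$ in Theorem~\ref{thm:Main_s} is that we pay once in the net size (the $\eps^{-2}$ in the exponent of $N$ in $|\calH_i|$, forced by the $r = \log_2(1/\eps^2)$ shift) and once more in the Chernoff bound (the $\eps^{-3}\eta^{-1}$ in $q$), and these combine multiplicatively. The improved bound has only $\eps^{-1}\eta^{-1}$ times $\log^2(1/\eps)$, so the idea must be to trade the crude union bound over all of $\calH_i$ for something that exploits the structure of the vectors $h^{(i)}$ more efficiently — for instance, by observing that the number of ``large'' coordinates (those in $B_i$, where $|g^{(i+r)}_j|^2 \gtrsim 2^{-i}$) is bounded, so that each $h^{(i)}$ is itself close to a sparse vector, and one can net these sparse vectors with far fewer than $2^i \cdot \eps^{-2} \cdot \log(1/\gamma)$ effective degrees of freedom.

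Concretely, first I would re-examine the approximation step. As before, set $t = \log_2(1/\eta)$ and $\gamma = \eta/(2t)$, but now choose the shift $r$ more judiciously — likely $r = O(\log\log(1/\eps))$ rather than $r = \log_2(1/\eps^2)$ — and define the vector sets $\calG_i$ via Maurey sampling with $F$ of size $O(2^i \cdot \log(1/\gamma))$ from the distribution $D$ on $[N]\times\{0,1,2,3\}$, exactly as in Lemma~\ref{lemma:approx} and equation~\eqref{eq:vectorg}. The point is that Lemma~\ref{lemma:approx} and Lemma~\ref{lemma:exists_gi} go through essentially unchanged: any $Mx$ with $\|x\|_1=1$ is, up to an $O(\eta)$ error on an $O(\eta)$-fraction of bad coordinates, decomposed as $\sum_i h^{(i)}$ with $h^{(i)}_j \approx |g^{(i+r)}_j|^2 \mathbb 1_{j\in B_i}$ and $h^{(i)}_j \le O(2^{-i})$. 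What changes is the accounting: I would note that $\sum_j h^{(i)}_j \le \|Mx\|_2^2 + O(\eta) = O(1)$ (after the $\|x\|_1=1$ normalization the total mass is bounded), so at most $O(2^i)$ coordinates $j$ can have $h^{(i)}_j \gtrsim 2^{-i}$; combined with the definition of $B_i$ this says the support of $h^{(i)}$ relevant to $B_i$ has size $O(2^i)$. That is the key structural gain.

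Next comes the replacement for Lemma~\ref{lemma:QisGood}. For a fixed $i$ and a fixed choice of $(g^{(1+r)},\ldots,g^{(i+r)})$, the vector $h^{(i)}$ has entries in $[0, O(2^{-i})]$ and (effectively) support size $O(2^i)$, so $\expectation_{j\in[N]}[h^{(i)}_j] = \mu_i$ is at most $O(2^i \cdot 2^{-i}/N) = O(1/N)$, but more to the point the per-sample random variable $h^{(i)}_{j}$ for $j \in Q$ uniform is bounded by $a = O(2^{-i})$ and has the small mean $\mu_i$. Applying Corollary~\ref{cor:ChernoffCombined} with $\alpha = \tilde\eta := \eta/(2t)$ and $a = O(2^{-i})$ gives success probability $1 - 2^{-\Omega(2^i q \eps \tilde\eta)}$ for a single $h^{(i)}$, just as before. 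The improvement must therefore come entirely from shrinking the union bound, i.e.\ from a better bound on the number of distinct $h^{(i)}$ that can arise. Here I would argue that although $\calG_{i+r}$ has $N^{\Theta(2^{i+r}\log(1/\gamma))}$ elements, the only information about $g^{(i+r)}$ that survives into $h^{(i)}$ is (a) which coordinates lie in $B_i$ — an $O(2^i)$-subset of $[N]$, contributing $\binom{N}{O(2^i)} \le N^{O(2^i)}$ choices — and (b) the squared magnitudes $|g^{(i+r)}_j|^2$ on those coordinates, up to the additive precision $\tilde\eta$ needed for the Chernoff step, which on each of the $O(2^i)$ coordinates is one of $O(2^{-i}/\tilde\eta) = O(1/(\tilde\eta 2^i))$ buckets. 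So $|\calH_i| \le N^{O(2^i)} \cdot (1/(\tilde\eta 2^i))^{O(2^i)} \le N^{O(2^i)}$ for $N$ large, \emph{independent of $\eps$ and of $r$}. Wait — one must be careful that $h^{(i)}$ genuinely depends only on $g^{(i+r)}$ and not also on $g^{(1+r)},\ldots,g^{(i+r-1)}$ through the definition of $B_i$; I would handle this by redefining $B_i$ so it depends only on $g^{(i+r)}$ (e.g.\ $B_i = \{j : 2\cdot 2^{-i/2} \le |g^{(i+r)}_j| < 2\cdot 2^{-(i-1)/2}\}$, using the annulus directly), which is consistent with the bounds in~\eqref{eq:g_i+r} already used in Lemma~\ref{lemma:exists_gi}. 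Then the union bound over $\calH_i$ costs only $N^{O(2^i)}$, so we need $2^i q \eps \tilde\eta \gtrsim 2^i \log N$, i.e.\ $q \gtrsim \eps^{-1}\tilde\eta^{-1}\log N = \eps^{-1}\eta^{-1} t \log N$; summing the $\log^2(1/\eta)$ from $t\log(1/\gamma)$ over $i \le t$ and carrying the extra $\log^2(1/\eps)$ factor from the choice $r = O(\log\log(1/\eps))$ (needed so the approximation errors $2^{-(i+r)/2}$ versus $2^{-i/2}$ are controlled by $\eps$ rather than a constant) yields $q = O(\log^2(1/\eps)\cdot\eps^{-1}\eta^{-1}\log N\cdot\log^2(1/\eta))$, and success probability $1 - 2^{-\Omega(\log N\cdot\log(1/\eta))}$ as claimed.

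The main obstacle I expect is making the ``$h^{(i)}$ has support $O(2^i)$ and hence only $N^{O(2^i)}$ possibilities'' step fully rigorous and compatible with the rest of the chaining. In particular: (i) one has to get the total-mass bound $\sum_j h^{(i)}_j = O(1)$ honestly, which requires controlling it via $\|Mx\|_2^2$ \emph{after} the approximation step rather than before, so the bad coordinates have to be threaded through carefully; (ii) the magnitude-bucketing of $|g^{(i+r)}_j|^2$ must be at a resolution fine enough that the Chernoff estimate with slack $\tilde\eta$ still applies to the bucketed vector rather than the original $h^{(i)}$, which means replacing $h^{(i)}$ by a rounded version $\widetilde h^{(i)}$ throughout Lemma~\ref{lemma:exists_gi}-style arguments and absorbing the rounding error into the $O(\eta)$ budget; and (iii) decoupling $B_i$ from the lower-index $g$'s, as noted above, so that $|\calH_i|$ really is bounded by a function of $g^{(i+r)}$ alone. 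Once these three points are nailed down, the probabilistic argument (Lemma~\ref{lemma:approx}, Markov's inequality to get a simultaneously-good $g$ for both $[N]$ and $Q$, the union bound over $i$, and the final deduction of~\eqref{eq:thm_improved} by combining Items~1 and~2 of the Lemma~\ref{lemma:exists_gi} analogue) is a routine adaptation of Section~\ref{sec:simple}.
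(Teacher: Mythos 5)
Your plan hinges on two claims, and both fail. First, the structural claim that $\sum_j h^{(i)}_j \le \|Mx\|_2^2 + O(\eta) = O(1)$, hence that $B_i$ (and so the effective support of $h^{(i)}$) has size $O(2^i)$, is false under the normalization you yourself adopt ($\|M\|_\infty = 1$, $\|x\|_1 = 1$): only the \emph{average} $\expectation_{j\in[N]}[|(Mx)_j|^2]$ is $O(1)$, while the unnormalized mass $\|Mx\|_2^2$ can be as large as $N$. For example, if $M$ has all entries of modulus $1$ (a rescaled Fourier matrix) and $x$ is a standard basis vector, then $|(Mx)_j| = 1$ for every $j$, so essentially all $N$ coordinates fall into a single $B_i$ with $i = O(1)$. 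Consequently your count of possible $h^{(i)}$ is not $N^{O(2^i)}$; even the choice of support alone can range over $\binom{N}{\Theta(N)}$ sets, and the entire accounting $q \gtrsim \eps^{-1}\tilde\eta^{-1}\log N$ collapses. Second, the choice $r = O(\log\log(1/\eps))$ cannot give the theorem: on $B_i$ the net guarantees $|(Mx)_j| \approx_{0,\,2^{-(i+r)/2}} |g^{(i+r)}_j|$ against a value of order $2^{-i/2}$, so the relative error is $2^{-r/2}$, and to obtain $\approx_{\eps,\cdot}$ you need $2^{-r/2} \lesssim \eps$, i.e.\ $r = \Theta(\log(1/\eps))$ --- exactly the value you were trying to avoid. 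Increasing $q$ by a $\log^2(1/\eps)$ factor does not compensate, since $q$ has no effect on the approximation quality of the nets $\calG_{i+r}$.

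The actual improvement in the paper comes from a different mechanism: keep $r = \log_2(1/\eps^2)$ and the full nets, but chain over the intermediate scales $m = i,\ldots,i+r$. One sets $h^{(i,m)}_j = |g^{(m)}_j|^2\,\mathbb{1}_{j\in C_i}$ and telescopes $h^{(i,i+r)} = \sum_{m=i}^{i+r}\parens[\big]{h^{(i,m)}-h^{(i,m-1)}}$; each (truncated) difference $\Delta^{(i,m)}$ has entries bounded by $O(2^{-(i+m)/2})$ and is determined by $g^{(1)},\ldots,g^{(m)}$ alone, so it lies in a net of size $N^{O(2^m\log(1/\gamma))}$. An \emph{additive} concentration bound (Corollary~\ref{cor:ChernoffCombinedNew}) then gives failure probability $2^{-\Omega(2^m q\tilde\eps\tilde\eta)}$, which balances the union bound at every level $m$ simultaneously with $q = O(\tilde\eps^{-1}\tilde\eta^{-1}\log N\cdot\log(1/\gamma))$; taking $\tilde\eps = \eps/r$ and $\tilde\eta = \eta/(rt)$ produces the $\log^2(1/\eps)$ factor. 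The multiplicative form of the final estimate is recovered not per-vector but at the end, because the additive error at level $(i,m)$ is $O(\tilde\eps\, 2^{-i}|C_i|/N + \tilde\eta)$ and $\sum_i 2^{-i}|C_i|/N \le \expectation_{j\in[N]}[|(Mx)_j|^2] + \eta$ (Item~1 of Lemma~\ref{lemma:exists_him}). None of this machinery --- the telescoping, the scale-dependent sup-norm bound on the differences, or the conversion of additive to relative error via the $|C_i|$ mass bound --- appears in your proposal, so as written it does not yield Theorem~\ref{thm:Main_s_}.
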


We can assume that $\eps \geq \eta$, as otherwise, one can apply the theorem with parameters $\eta/2, \eta/2$ and derive~\eqref{eq:thm_improved} for $\eps,\eta$ as well (because the right-hand size is bounded from above by $\|x\|_1^2 \cdot \|M\|_\infty^2$). As before, we assume without loss of generality that $\|M\|_\infty = 1$. For $\eps \geq \eta >0$, we define $t = \log_2(1/\eta)$ and $r = \log_2(1/\eps^2)$. For the analysis given in this section, we define $\gamma = \eta/(60(t+r))$. Throughout the proof, we use the vector sets $\calG_i$ from Section~\ref{sec:simple} and Lemma~\ref{lemma:approx} for this value of $\gamma$.

\paragraph{The Vector Sets $\calD_{i,m}$.}
For a $(t+r)$-tuple of vectors $(g^{(1)},\ldots,g^{(t+r)}) \in \calG_1 \times \cdots \times \calG_{t+r}$ and for $1 \leq i \leq t$, let $C_i$ be the set of all $j \in [N]$ for which $i$ is the smallest index satisfying $|g^{(i)}_j| \geq 2 \cdot 2^{-i/2}$. For $m = i,\ldots,i+r$ define the vector $h^{(i,m)}$ by
\begin{align}\label{eq:H_i,m}
h^{(i,m)}_j = |g^{(m)}_j|^2 \cdot \mathbb{1}_{j \in C_i},
\end{align}
and for other values of $m$ define $h^{(i,m)} = 0$.
Now, for every $m$, let $\Delta^{(i,m)}$ be the vector defined by
\begin{align}\label{eq:Delta_i,m}
\Delta^{(i,m)}_j = \left\{
  \begin{array}{ll}
    h^{(i,m)}_j-h^{(i,m-1)}_j, & \hbox{if $|h^{(i,m)}_j-h^{(i,m-1)}_j| \leq 30 \cdot 2^{-(i+m)/2}$;} \\
    0, & \hbox{otherwise.}
  \end{array}
\right.
\end{align}
Note that the support of $\Delta^{(i,m)}$ is contained in $C_i$.
Let $\calD_{i,m}$ be the set of all vectors $\Delta^{(i,m)}$ that can be obtained in this way.

\begin{claim}\label{claim:sizeHiNew}
For every $1 \leq i \leq t$ and $i \leq m \leq i+r$, $|\calD_{i,m}| \leq N^{O (2^m \cdot \log (1/\gamma))}.$
\end{claim}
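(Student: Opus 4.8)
The plan is to mimic the counting argument used in the proof of Claim~\ref{claim:sizeHi}: identify the small collection of vectors $g^{(j)}$ that already pin down $\Delta^{(i,m)}$, and then multiply together the cardinality bounds from Claim~\ref{claim:sizeGi}.

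First I would observe that the set $C_i$ defined before~\eqref{eq:H_i,m} depends only on $g^{(1)},\ldots,g^{(i)}$, since membership $j\in C_i$ is the conjunction of $|g^{(i)}_j|\ge 2\cdot 2^{-i/2}$ and $|g^{(i')}_j|< 2\cdot 2^{-i'/2}$ for all $i'<i$. Next, for each $j\in[N]$ the value $h^{(i,m)}_j=|g^{(m)}_j|^2\cdot\mathbb{1}_{j\in C_i}$ is determined by $C_i$ together with $g^{(m)}$, and likewise $h^{(i,m-1)}$ is determined by $C_i$ together with $g^{(m-1)}$ — with the convention built into~\eqref{eq:H_i,m} that $h^{(i,m-1)}=0$ when $m=i$, so that no vector outside $\calG_1,\ldots,\calG_i,\calG_m$ is needed in that boundary case. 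Since $\Delta^{(i,m)}$ in~\eqref{eq:Delta_i,m} is a fixed (coordinatewise) function of $h^{(i,m)}$ and $h^{(i,m-1)}$, it is fully determined by the tuple $(g^{(1)},\ldots,g^{(i)},g^{(m-1)},g^{(m)})$.

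I would then bound, using Claim~\ref{claim:sizeGi} and summing the geometric series,
\[
|\calD_{i,m}| \;\le\; |\calG_1|\cdots|\calG_i|\cdot|\calG_{m-1}|\cdot|\calG_m|
\;\le\; N^{O(\log(1/\gamma))\cdot(2^1+\cdots+2^i+2^{m-1}+2^m)}
\;\le\; N^{O(\log(1/\gamma))\cdot(2^{i+1}+2^{m+1})},
\]
and finally, since $i\le m$, the exponent is $O(2^m\cdot\log(1/\gamma))$, which is exactly the claimed bound.

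As for difficulty, this is essentially a bookkeeping exercise and I expect no real obstacle. The one point deserving a moment's care is the boundary case $m=i$, where $m-1$ lies outside the index range $\{i,\dots,i+r\}$ (and even outside $\{1,\dots,t+r\}$ when $i=1$); there one simply notes that $h^{(i,m-1)}=0$ by definition, so $\Delta^{(i,i)}$ depends only on $g^{(1)},\dots,g^{(i)}$ and the bound is only easier. Equivalently, one can phrase the argument uniformly by saying that $\Delta^{(i,m)}$ is determined by $g^{(1)},\dots,g^{(\max(i,\,m-1))},g^{(m)}$, which avoids any case split in the geometric-series step.
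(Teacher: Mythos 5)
Your proof is correct and follows essentially the same counting argument as the paper: the paper simply notes that $\Delta^{(i,m)}$ is determined by the full tuple $(g^{(1)},\ldots,g^{(m)}) \in \calG_1 \times \cdots \times \calG_m$ and bounds $|\calD_{i,m}| \leq |\calG_1|\cdots|\calG_m| \leq N^{O(\log(1/\gamma))\cdot 2^{m+1}}$ via Claim~\ref{claim:sizeGi} and the geometric series. Your observation that only $g^{(1)},\ldots,g^{(i)},g^{(m-1)},g^{(m)}$ are actually needed (with the correct handling of the boundary case $m=i$) is a slight refinement, but it yields the same bound and does not change the substance of the argument.
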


\begin{proof}
Observe that every vector in $\calD_{i,m}$ is fully defined by some $(g^{(1)},\ldots,g^{(m)}) \in \calG_1 \times \cdots \times \calG_m$. Hence
\begin{align*}
|\calD_{i,m}| \leq
|\calG_1| \cdots |\calG_m|
 \leq
N^{O (\log (1/\gamma)) \cdot (2^1+2^2+\cdots+2^m)}
 \leq
N^{O (\log (1/\gamma)) \cdot 2^{m+1}} \; ,
\end{align*}
and the claim follows.
\end{proof}

\begin{lemma}\label{lemma:QisGood_}
For every $\tilde{\eps}, \tilde{\eta} >0$ and some $q = O(\tilde{\eps}^{-1} \tilde{\eta}^{-1} \log N \cdot \log(1/\gamma))$, let $Q$ be a multiset of $q$ uniform and independent random elements of $[N]$. Then, with probability $1- 2^{-\Omega(\log N \cdot \log (1/\gamma))}$, it holds that
for every $1 \leq i \leq t$, $m$, and a vector $\Delta^{(i,m)} \in \calD_{i,m}$ associated with a set $C_i$,
\begin{equation}\label{eq:qapproxdelta}
\expectation_{j \in Q} \bracks[\big]{\Delta^{(i,m)}_j} \approx_{0, b} \expectation_{j \in [N]}\bracks[\big]{\Delta^{(i,m)}_j}
\mbox{ ~~for~~ }
b = O \Big (\tilde{\eps} \cdot 2^{-i} \cdot \frac{|C_i|}{N} + \tilde{\eta} \Big )
\; .
\end{equation}
\end{lemma}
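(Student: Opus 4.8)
The plan is to fix $i$, $m$, and a vector $\Delta^{(i,m)} \in \calD_{i,m}$ associated with a set $C_i$, apply a concentration bound to the single vector, and then take a union bound over all of $\calD_{i,m}$ and over $i,m$. For the single-vector step I would use Corollary~\ref{cor:ChernoffCombinedNew} (the two-sided additive/multiplicative bound): set $X_j = \Delta^{(i,m)}_j$ for $j$ ranging over the $q$ samples in $Q$. The key observation is the clipping in~\eqref{eq:Delta_i,m}: each entry of $\Delta^{(i,m)}$ has absolute value at most $30 \cdot 2^{-(i+m)/2}$, so we may take $a = 30 \cdot 2^{-(i+m)/2}$. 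Moreover, the support of $\Delta^{(i,m)}$ lies in $C_i$, and on $C_i$ we have $|g^{(i)}_j| \le 3 \cdot 2^{-i/2}$ (the same upper-bound argument as in the proof of Lemma~\ref{lemma:exists_gi}, using the minimality of $i$ in the definition of $C_i$), hence $h^{(i,m)}_j = |g^{(m)}_j|^2$ is on the order of $2^{-i}$ for $j \in C_i$ — so $\tilde{\mu} := \expectation_{j\in[N]}[|\Delta^{(i,m)}_j|] = O(2^{-i} \cdot |C_i|/N)$. Plugging $\eps' = \tilde{\eps}$ and $\alpha = \tilde{\eta}$ into Corollary~\ref{cor:ChernoffCombinedNew} with the sample average over $Q$, we get $\expectation_{j \in Q}[\Delta^{(i,m)}_j] \approx_{0,\, \tilde{\eps}\tilde{\mu} + \tilde{\eta}} \expectation_{j\in[N]}[\Delta^{(i,m)}_j]$, and $\tilde{\eps}\tilde{\mu} + \tilde{\eta} = O(\tilde{\eps} \cdot 2^{-i} \cdot |C_i|/N + \tilde{\eta}) = b$ as claimed, with failure probability at most $4 e^{-C \cdot q \alpha \eps' / a} = 2^{-\Omega(q \tilde{\eps}\tilde{\eta} \cdot 2^{(i+m)/2})}$.

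Next I would run the union bound. By Claim~\ref{claim:sizeHiNew}, $|\calD_{i,m}| \le N^{O(2^m \log(1/\gamma))}$, so the probability that some $\Delta^{(i,m)} \in \calD_{i,m}$ fails~\eqref{eq:qapproxdelta} is at most
\[
N^{O(2^m \log(1/\gamma))} \cdot 2^{-\Omega(q \tilde{\eps}\tilde{\eta} \cdot 2^{(i+m)/2})}
\le
N^{O(2^m \log(1/\gamma))} \cdot 2^{-\Omega(q \tilde{\eps}\tilde{\eta} \cdot 2^{m})} \; ,
\]
using $2^{(i+m)/2} \ge 2^{m/2}$ and $m \le i+r$... wait, that direction is wrong; I should instead note $i \ge 1$ gives $2^{(i+m)/2} \ge 2^{(m+1)/2}$, which is $\Omega(2^{m/2})$, and then push for a clean bound. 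With $q = O(\tilde{\eps}^{-1}\tilde{\eta}^{-1}\log N \cdot \log(1/\gamma))$ chosen with a large enough constant, the exponent $q\tilde{\eps}\tilde{\eta} \cdot 2^{m/2}$ dominates $2^m \log(1/\gamma)\log N$ for all $m \ge 1$ (since $2^{m/2} \cdot \log N \gg 2^m$ fails for large $m$ — so I actually want the $2^m$ in the numerator, not $2^{m/2}$; the clean way is to observe that the Chernoff exponent from Corollary~\ref{cor:ChernoffCombinedNew} is linear in $N\alpha\eps'/a$ with $a = \Theta(2^{-(i+m)/2})$, so it is $\Omega(q\tilde\eta\tilde\eps \cdot 2^{(i+m)/2})$, and combined with $|C_i| \le N$ one has enough room; in any case I will choose the constant in $q$ so that this exponent exceeds $C' 2^m \log(1/\gamma)\log N$ for a suitable $C'$). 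The per-$(i,m)$ failure probability is then $2^{-\Omega(2^m \log N \log(1/\gamma))} \le 2^{-\Omega(\log N \log(1/\gamma))}$, and a final union bound over the $O((t+r)\cdot r)$ choices of $(i,m)$ — or more crudely just over $i \le t$ and $i \le m \le i+r$ — keeps the total failure probability at $2^{-\Omega(\log N \cdot \log(1/\gamma))}$, absorbing the polynomial number of $(i,m)$ pairs into the $\Omega$.

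The main obstacle I anticipate is bookkeeping the exponents so that the two competing requirements on $q$ balance: the union-bound cost grows like $2^m$ (from $|\calD_{i,m}|$) while the Chernoff savings grow only like $2^{(i+m)/2} \ge 2^{(m+1)/2}$ in the exponent. The resolution — and the reason the statement has $b$ scaling with $2^{-i}|C_i|/N$ rather than just $2^{-i}$ — is that the relevant comparison is not $2^{m}$ versus $2^{m/2}$ in isolation: the Chernoff bound's exponent is $N\alpha\eps'/a$, which after substituting $a = \Theta(2^{-(i+m)/2})$ and the given $q$ becomes $\Omega(2^{(i+m)/2}\log N \log(1/\gamma))$, and since $m \le i + r$ we have $i \ge m - r$, so $2^{(i+m)/2} \ge 2^{m - r/2} = 2^m \cdot 2^{-r/2} = 2^m \cdot \eps$; thus the savings are $\Omega(\eps \cdot 2^m \log N \log(1/\gamma))$, which beats the union-bound cost $O(2^m \log(1/\gamma)\log N)$ only up to the factor $\eps$ — this is precisely why the improved analysis needs to be more careful than a one-shot net, and why $q$ in the final theorem carries the extra $\log^2(1/\eps)$ rather than $\eps^{-2}$; but for \emph{this} lemma, with $\tilde\eps,\tilde\eta$ as free parameters and $q = O(\tilde\eps^{-1}\tilde\eta^{-1}\log N \log(1/\gamma))$, the constant in $q$ can simply be taken large enough that $\Omega(q\tilde\eps\tilde\eta \cdot 2^{(i+m)/2})$ dominates, so the only real care needed is to track that $a$ (the range of $\Delta^{(i,m)}$) is $2^{-(i+m)/2}$ and not something larger, which is exactly what the clipping in~\eqref{eq:Delta_i,m} guarantees.
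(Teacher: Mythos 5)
Your single-vector concentration step is set up with the wrong relative-error parameter, and this is exactly where the lemma's claimed $q$ is lost. Applying Corollary~\ref{cor:ChernoffCombinedNew} with $\eps' = \tilde\eps$, $\alpha = \tilde\eta$, and $a = 30\cdot 2^{-(i+m)/2}$ gives a per-vector failure probability of $2^{-\Omega(q\tilde\eps\tilde\eta\cdot 2^{(i+m)/2})}$, whereas the union bound over $\calD_{i,m}$ costs $N^{O(2^m\cdot\log(1/\gamma))}$ by Claim~\ref{claim:sizeHiNew}. As you observe yourself, $2^{(i+m)/2}$ can be as small as $2^{m-r/2} = \eps\cdot 2^m$ (when $m=i+r$), so with $q = O(\tilde\eps^{-1}\tilde\eta^{-1}\log N\cdot\log(1/\gamma))$ your exponent is only $\Omega(\eps\cdot 2^m\log N\cdot\log(1/\gamma))$, short of the needed $2^m\log N\cdot\log(1/\gamma)$ by a factor of $1/\eps$. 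Your proposed remedy --- take the constant in $q$ ``large enough'' --- cannot work, because that constant must be universal: the lemma is invoked in the proof of Theorem~\ref{thm:Main_s_} with $\tilde\eps=\eps/r$ and $\tilde\eta=\eta/(rt)$, and hiding a factor $1/\eps$ in the constant amounts to proving the lemma only with $q = O(\tilde\eps^{-1}\tilde\eta^{-1}\eps^{-1}\log N\cdot\log(1/\gamma))$, which would reintroduce an extra $\eps^{-1}$ in the final bound and defeat the purpose of this section. (Relatedly, the slack you found is not ``why the final theorem carries $\log^2(1/\eps)$''; that factor comes solely from the choice $\tilde\eps=\eps/r$, $\tilde\eta=\eta/(rt)$ in the application.)

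The missing idea is to let the relative error grow with the scale $m$: apply Corollary~\ref{cor:ChernoffCombinedNew} with $\eps' = \tilde\eps\cdot 2^{(m-i)/2}$ (keeping $\alpha=\tilde\eta$ and $a = 30\cdot 2^{-(i+m)/2}$ as you had). This is affordable on the additive-error side because the clipping in~\eqref{eq:Delta_i,m} together with $\supp(\Delta^{(i,m)})\subseteq C_i$ gives $\tilde\mu = \expectation_{j\in[N]}\bracks[\big]{|\Delta^{(i,m)}_j|} \le 30\cdot 2^{-(i+m)/2}\cdot |C_i|/N$, so that $\eps'\tilde\mu = O(\tilde\eps\cdot 2^{-i}\cdot |C_i|/N)$ uniformly in $m$, i.e.\ the same $b$ as in~\eqref{eq:qapproxdelta}. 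Note also that your justification of $\tilde\mu = O(2^{-i}\cdot|C_i|/N)$ via $|g^{(i)}_j|\le 3\cdot 2^{-i/2}$ on $C_i$ is not valid for arbitrary members of $\calD_{i,m}$ (that upper bound is established only for good coordinates of tuples approximating a common $Mx$, as in Lemma~\ref{lemma:exists_him}); it is the clipping bound, which is stronger by the factor $2^{-(m-i)/2}$, that both yields the correct $\tilde\mu$ estimate and makes the enlarged $\eps'$ harmless. With this choice the Chernoff exponent becomes $q\tilde\eta\eps'/a = \Omega(q\tilde\eps\tilde\eta\cdot 2^{m})$, which matches the $2^m$ growth of $\log|\calD_{i,m}|$ exactly; then $q = O(\tilde\eps^{-1}\tilde\eta^{-1}\log N\cdot\log(1/\gamma))$ with a universal constant beats the union bound, and the remaining union over the $O(tr)$ pairs $(i,m)$ goes through as you describe.
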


\begin{proof}
Fix $i$, $m$, and a vector $\Delta^{(i,m)} \in \calD_{i,m}$ associated with a set $C_i$ as in~\eqref{eq:Delta_i,m}.
Notice that
\[
\expectation_{j \in [N]}[|\Delta^{(i,m)}_j|]
 \leq
30 \cdot 2^{-(i+m)/2} \cdot \frac{|C_i|}{N}\; .
\]
By Corollary~\ref{cor:ChernoffCombinedNew},
applied with
\[\eps' = \tilde{\eps} \cdot 2^{(m-i)/2},~~~\alpha = \tilde{\eta},~~~{\mbox{and}}~~~a = 30 \cdot 2^{-(i+m)/2},\]
we have that~\eqref{eq:qapproxdelta} holds with probability $1-2^{-\Omega(2^m \cdot q\tilde{\eps}\tilde{\eta})}$.
Using Claim~\ref{claim:sizeHiNew}, the union bound over all
the vectors in $\calD_{i,m}$ implies that the probability
that some $\Delta^{(i,m)} \in \calD_{i,m}$ does not satisfy~\eqref{eq:qapproxdelta}
is at most
\[
N^{O (2^m \cdot \log (1/\gamma))} \cdot 2^{-\Omega (2^m \cdot q \tilde{\eps}\tilde{\eta})} \leq 2^{-\Omega(2^m \cdot \log N \cdot \log (1/\gamma))} \; .
\]
The result follows by a union bound over $i$ and $m$.
\end{proof}

\paragraph{Approximating the Vectors $Mx$.}

\begin{lemma}\label{lemma:exists_him}
For every multiset $Q \subseteq [N]$ and every vector $x \in \C^N$ with $\|x\|_1 = 1$ there exist vector collections $(\Delta^{(i,m)} \in \calD_{i,m})_{m=i,\ldots,i+r}$ associated with sets $C_i$ ($1 \leq i \leq t$), for which
\begin{enumerate}
  \item\label{itm:00} $\expectation_{j \in [N]} \bracks[\big]{|(Mx)_j|^2 } \geq \sum_{i=1}^{t}{2^{-i} \cdot \frac{|C_i|}{N}} -\eta,$
  \item\label{itm:11} $\expectation_{j \in Q} \bracks[\big]{ |(Mx)_j|^2 } \approx_{O(\eps), O(\eta)}
	       \expectation_{j \in Q} \bracks[\big]{ \sum_{i=1}^{t}{\sum_{m=i}^{i+r}{\Delta^{(i,m)}_j}} },$
        and
  \item\label{itm:22} $\expectation_{j \in [N]} \bracks[\big]{|(Mx)_j|^2 } \approx_{O(\eps), O(\eta)}
	       \expectation_{j \in [N]} \bracks[\big]{\sum_{i=1}^{t}{\sum_{m=i}^{i+r}{\Delta^{(i,m)}_j}} }.$
\end{enumerate}
\end{lemma}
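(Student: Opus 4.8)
The plan is to mimic the structure of Lemma~\ref{lemma:exists_gi} from the simpler analysis, but now tracking each dyadic scale $|g^{(m)}|^2$ and its successive \emph{increments} $\Delta^{(i,m)}$ rather than just a single truncated vector $h^{(i)}$. First I would invoke Lemma~\ref{lemma:approx} for each $1 \le m \le t+r$ (with the present value of $\gamma$) to obtain vectors $g^{(m)} \in \calG_m$ such that $|(Mx)_j| \approx_{0,2^{-m/2}} |g^{(m)}_j|$ for all but a $\gamma$ fraction of $j \in [N]$ and all but a $\gamma$ fraction of $j \in Q$. Call $j$ \emph{good} if this holds for \emph{every} $m$; then all but a $(t+r)\gamma$ fraction of coordinates (in $[N]$ and in $Q$) are good. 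Define the sets $C_i$ and the vectors $h^{(i,m)}, \Delta^{(i,m)}$ as in~\eqref{eq:H_i,m}--\eqref{eq:Delta_i,m} from this tuple $(g^{(1)},\dots,g^{(t+r)})$. Note that $h^{(i,i-1)}$ should be interpreted as $0$ (equivalently $|g^{(i-1)}|^2 \mathbb{1}_{C_i}$ with the convention that the telescoping starts at $0$), so that $\sum_{m=i}^{i+r} h^{(i,m)} - h^{(i,m-1)} = h^{(i,i+r)}$, the value of $|g^{(i+r)}_j|^2$ truncated to $C_i$.

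Next I would establish the three items. For Item~\ref{itm:00}: for $j \in C_i$ the defining inequality gives $|g^{(i)}_j| \ge 2\cdot 2^{-i/2}$, hence $|g^{(i)}_j|^2 \ge 4 \cdot 2^{-i}$; summing over the (disjoint) sets $C_i$ and over good $j$ shows $\expectation_{j\in[N]}[|(Mx)_j|^2]$ is at least roughly $\sum_i 2^{-i}|C_i|/N$ up to the error from bad coordinates, which is at most $(t+r)\gamma \cdot \max|(Mx)_j|^2$; since a good $j \in C_i$ has $|g^{(i)}_j| \le 3 \cdot 2^{-i/2}$ (the argument in Lemma~\ref{lemma:exists_gi} showing $|g^{(i)}_j| \le 3\cdot 2^{-i/2}$ carries over verbatim, using $|g^{(i-1)}_j| < 2\cdot 2^{-(i-1)/2}$ from minimality of $i$), and bad $j$ still contribute at most $2$ each (using $|(Mx)_j| \le \|x\|_1\|M\|_\infty = 1$), the total slack is $O((t+r)\gamma) = O(\eta)$ by the choice $\gamma = \eta/(60(t+r))$. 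The lower bound direction is what Item~\ref{itm:00} asserts, and only requires the contribution of good $j$ in $\bigcup C_i$; good $j$ outside $\bigcup C_i$ and bad $j$ only decrease the estimate by $O(\eta)$, which is absorbed into the ``$-\eta$''.

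For Items~\ref{itm:11} and~\ref{itm:22} I would argue pointwise on good coordinates. For a good $j$, either $j \in C_i$ for a (unique) $i \le t$, in which case $\sum_{m=i}^{i+r}\Delta^{(i,m)}_j$ telescopes — \emph{provided} no increment is discarded by the $30\cdot2^{-(i+m)/2}$ threshold in~\eqref{eq:Delta_i,m}. Here is the key point to check: on a good $j \in C_i$, both $|g^{(m)}_j|$ and $|g^{(m-1)}_j|$ lie within $2^{-m/2}+2^{-(m-1)/2} \le 3\cdot 2^{-m/2}$ of $|(Mx)_j|$, and $|(Mx)_j| \le 3\cdot 2^{-i/2}$, so $|g^{(m)}_j|, |g^{(m-1)}_j| \le 3\cdot 2^{-i/2} + 3\cdot 2^{-m/2} \le 6 \cdot 2^{-i/2}$ (since $m \ge i$); therefore $|h^{(i,m)}_j - h^{(i,m-1)}_j| = \big||g^{(m)}_j|^2 - |g^{(m-1)}_j|^2\big| = \big(|g^{(m)}_j|+|g^{(m-1)}_j|\big)\big||g^{(m)}_j|-|g^{(m-1)}_j|\big| \le 12\cdot 2^{-i/2} \cdot \big(2^{-m/2}+2^{-(m-1)/2}\big) \le 30 \cdot 2^{-(i+m)/2}$, so the threshold never fires and $\sum_{m=i}^{i+r}\Delta^{(i,m)}_j = h^{(i,i+r)}_j = |g^{(i+r)}_j|^2$ (the final truncation at $9\cdot 2^{-i}$ is not needed here since $|g^{(i+r)}_j|^2 \le 9\cdot 2^{-i}$ already on good coordinates). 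Then $|(Mx)_j| \approx_{0,2^{-(i+r)/2}} |g^{(i+r)}_j|$ together with the lower bound $|g^{(i+r)}_j| \ge |g^{(i)}_j| - O(\ldots) \ge \Omega(2^{-i/2})$ — or more directly, since $|(Mx)_j| \ge 2\cdot 2^{-i/2} - 2^{-i/2}\cdot O(\eps) = \Omega(2^{-i/2})$ and the additive error $2^{-(i+r)/2} = \eps \cdot 2^{-i/2}$ is an $O(\eps)$ relative error — gives $|(Mx)_j|^2 \approx_{O(\eps),0} \sum_{m}\Delta^{(i,m)}_j$. If $j$ is good but in no $C_i$, the same bound as in Lemma~\ref{lemma:exists_gi} gives $|(Mx)_j| \le 3\sqrt{\eta}$, so $|(Mx)_j|^2 \approx_{0,9\eta} 0 = \sum_{i,m}\Delta^{(i,m)}_j$. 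Summing over good $j$, dividing by $N$ (resp. $|Q|$), and bounding the bad coordinates' total contribution — each bad $j$ contributes at most $\max(|(Mx)_j|^2, \sum|\Delta^{(i,m)}_j|) \le 2$, and there are at most $(t+r)\gamma$ fraction of them — by $2(t+r)\gamma = O(\eta)$, yields Items~\ref{itm:11} and~\ref{itm:22}.

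The main obstacle is the verification that the truncation in~\eqref{eq:Delta_i,m} is inactive on good coordinates, i.e.\ the $30\cdot 2^{-(i+m)/2}$ bound above; getting the constants to line up (the $60$ in $\gamma$, the $30$ in the threshold, the factor of $3$'s from the net approximation) is the only genuinely delicate bookkeeping, and one must be careful that the argument uses $m \ge i$ throughout so that $2^{-m/2} \le 2^{-i/2}$. Everything else is a routine adaptation of Lemma~\ref{lemma:exists_gi}: the partition into good/bad coordinates, the $|g^{(i)}_j| \le 3\cdot 2^{-i/2}$ estimate on $C_i$, and the Markov-type accounting for the bad fraction. The reason this refinement buys a better $\eps$-dependence is that we no longer need $q$ to resolve each $h^{(i)}$ to additive accuracy $\eta$ at multiplicative scale $\eps$; instead Lemma~\ref{lemma:QisGood_} only needs to resolve the small increments $\Delta^{(i,m)}$, whose typical size $2^{-(i+m)/2}$ shrinks as $m$ grows, so the $\eps^{-2}$ from a multiplicative Chernoff bound gets replaced by an $\eps^{-1}\log^2(1/\eps)$ coming from summing a geometric-type series in $r = \log_2(1/\eps^2)$ scales — but that payoff is realized in the final assembly (the proof of Theorem~\ref{thm:Main_s_}), not in this lemma, whose role is purely to produce the decomposition.
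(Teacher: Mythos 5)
Your proposal follows the paper's proof essentially step for step: invoke Lemma~\ref{lemma:approx} at every scale to get $(g^{(1)},\ldots,g^{(t+r)})$, split coordinates into good and bad, show that on good coordinates the $30\cdot 2^{-(i+m)/2}$ cap in~\eqref{eq:Delta_i,m} never triggers so the increments telescope to $h^{(i,i+r)}_j$, which approximates $|(Mx)_j|^2$ multiplicatively on $\bigcup C_i$ and additively (by $9\eta$) off it, and charge the bad coordinates $O((t+r)\gamma)=O(\eta)$ via the choice of $\gamma$. One minor constant slip worth fixing: for good $j\in C_i$ the correct upper bound is $|(Mx)_j|\le 3\cdot 2^{-(i-1)/2}$ (not $3\cdot 2^{-i/2}$), and the $m=i$ increment equals $|g^{(i)}_j|^2$ (since $h^{(i,i-1)}=0$, not a difference of two squares), bounded by $(3\sqrt{2}+1)^2\,2^{-i}<30\cdot 2^{-i}$; with these corrected constants every increment still clears the threshold, so the argument goes through exactly as in the paper.
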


\begin{proof}
By Lemma~\ref{lemma:approx}, for every $1 \leq i \leq t+r$ there exists a vector $g^{(i)} \in \calG_i$ that satisfies
\begin{align}\label{eq:Mxj_3}
|(Mx)_j| \approx_{0, 2^{-i/2}} |g^{(i)}_j|
\end{align}
for all but at most $\gamma$ fraction of $j \in [N]$ and for all but at most $\gamma$ fraction of $j \in Q$.
We say that $j \in [N]$ is {\em good} if~\eqref{eq:Mxj_3} holds for every $i$, and otherwise that it is {\em bad}. Notice that all but at most $(t+r) \gamma$ fraction of $j \in [N]$ are good and that all but at most $(t+r) \gamma$ fraction of $j \in Q$ are good. Consider the sets $C_i$ and vectors $h^{(i,m)}, \Delta^{(i,m)}$ associated with $(g^{(1)}, \ldots, g^{(t+r)})$ as defined in~\eqref{eq:H_i,m}.
We claim that $\Delta^{(i,m)}$ satisfy the requirements of the lemma.

Fix some $1 \leq i \leq t$. For every good $j \in C_i$,
the definition of $C_i$ implies that $|g^{(i)}_j| \geq 2 \cdot 2^{-i/2}$, so using~\eqref{eq:Mxj_3} it follows that
\begin{align}\label{eq:lowerMxj}
|(Mx)_j| \geq |g^{(i)}_j| -  2^{-i/2} \geq 2^{-i/2}.
\end{align}
We also claim that $|(Mx)_j| \leq 3 \cdot 2^{-(i-1)/2}$. This trivially holds for $i=1$, so assume that $i \geq 2$, and notice that the definition of $C_i$ implies that
$|g^{(i-1)}_j| < 2 \cdot 2^{-(i-1)/2}$, so using~\eqref{eq:Mxj_3}, it follows that
\begin{align}\label{eq:3_s2}
|(Mx)_j| \leq |g^{(i-1)}_j| + 2^{-(i-1)/2} \leq 3 \cdot 2^{-(i-1)/2}.
\end{align}
Since at most $(t+r) \gamma$ fraction of $j \in [N]$ are bad, \eqref{eq:lowerMxj} yields that
\begin{align*}
\expectation_{j \in [N]} \bracks[\big]{|(Mx)_j|^2}
\geq
\sum_{i=1}^{t}{2^{-i} \cdot \frac{|C_i|}{N}} -(t+r)\gamma/2 \geq \sum_{i=1}^{t}{2^{-i} \cdot \frac{|C_i|}{N}} -\eta,
\end{align*}
as required for Item~\ref{itm:00}.

Next, we claim that every good $j$ satisfies
\begin{equation}\label{eq:mxapproxh}
|(Mx)_j|^2 \approx_{O(\eps),O(\eta)} \sum_{i=1}^{t}{h^{(i,i+r)}_j} \; .
\end{equation}
For a good $j \in C_i$ and $m \geq i$,
\begin{align}\label{eq:10_i_m}
\abs[\big]{|(Mx)_j|^2 - h^{(i,m)}_j}
\leq
2 \cdot |(Mx)_j| \cdot 2^{-m/2} + 2^{-m} \leq 10 \cdot 2^{-(i+m)/2},
\end{align}
where the first inequality follows from~\eqref{eq:Mxj_3} and the second from~\eqref{eq:3_s2}.
In particular, for $m = i+r$ (recall that $r = \log_2(1/\eps^2)$), we have
\[
\abs[\big]{ |(Mx)_j|^2 - h^{(i,i+r)}_j} \leq
10 \cdot \eps \cdot 2^{-i}
\leq
10 \cdot \eps \cdot |(Mx)_j|^2
\; ,
\]
and thus $|(Mx)_j|^2 \approx_{O(\eps),0} h^{(i,i+r)}_j$. Since every good $j$ belongs to at most one of the sets $C_i$, for every good
$j \in \bigcup C_i$
we have
$|(Mx)_j|^2
\approx_{O(\eps),0} \sum_{i=1}^{t}{h^{(i,i+r)}_j}$.
On the other hand, if $j$ is good but does not belong to any $C_i$, by our choice of $t$, it satisfies
\[
|(Mx)_j| \leq
|g^{(t)}_j| + 2^{-t/2} \leq 3 \cdot 2^{-t/2} = 3 \sqrt{\eta} \; ,
\]
and thus $|(Mx)_j|^2 \approx_{0,9\eta} 0 = \sum_{i=1}^{t}{h^{(i,i+r)}_j}$.
This establishes that~\eqref{eq:mxapproxh} holds for every good $j$.

Next, we claim that for every good $j$,
\begin{align}\label{eq:eps_eta_Delta}
|(Mx)_j|^2
\approx_{O(\eps),O(\eta)}
\sum_{i=1}^{t}{\sum_{m=i}^{i+r}{\Delta^{(i,m)}_j}} \; .
\end{align}
This follows since for every $1 \leq i \leq t$, the vector $h^{(i,i+r)}$
can be written as the telescopic sum
\[
h^{(i,i+r)} = \sum_{m=i}^{i+r} \parens[\big]{h^{(i,m)}-h^{(i,m-1)}} \; ,
\]
where we used that $h^{(i,i-1)} = 0$.
We claim that for every good $j$, these differences satisfy
\[ |h^{(i,m)}_j - h^{(i,m-1)}_j | \leq 30 \cdot 2^{-(i+m)/2}, \]
thus establishing that~\eqref{eq:eps_eta_Delta} holds for every good $j$.
Indeed, for $m \geq i+1$, \eqref{eq:10_i_m} implies that
\begin{align}\label{eq:10}
|h^{(i,m)}_j - h^{(i,m-1)}_j | \leq 10 \cdot (2^{-(i+m)/2} + 2^{-(i+m-1)/2}) \leq 30 \cdot 2^{-(i+m)/2},
\end{align}
and for $m=i$ it follows from~\eqref{eq:Mxj_3} combined with~\eqref{eq:3_s2}.

Finally, for every bad $j$ we have
\[
\Big | |(Mx)_j|^2 - \sum_{i=1}^{t}{\sum_{m=i}^{i+r}{\Delta^{(i,m)}_j}} \Big |
\leq
1 + 30 \cdot \max_{1 \leq i \leq t} \Big (\sum_{m=i}^{i+r}{2^{-(i+m)/2} \Big )}
\leq
60
\; .\]
Since at most $(t+r)\gamma$ fraction of the elements in $[N]$ and in $Q$ are bad, their effect on the difference between the expectations in Items~\ref{itm:11} and~\ref{itm:22} can be bounded by $60(t+r) \gamma$. By our choice of $\gamma$ this is $\eta$, as required.
\end{proof}

Finally, we are ready to prove Theorem~\ref{thm:Main_s_}.

\begin{proof}[ of Theorem~\ref{thm:Main_s_}]
Recall that it can be assumed that $\eps \geq \eta$. By Lemma~\ref{lemma:QisGood_}, applied with $\tilde{\eps} = \eps/r$ and $\tilde{\eta} = \eta/(rt)$, a random multiset $Q$ of size
\begin{align*}
q &= O \Big (\eps^{-1} \eta^{-1} \cdot r^2 \cdot t \cdot \log N \cdot \log(1/\gamma) \Big) \\
&= O \Big ( \log^2(1/\eps) \cdot \eps^{-1} \eta^{-1} \log N \cdot \log^2 (1/\eta) \Big )
\end{align*}
satisfies with probability $1- 2^{-\Omega(\log N \cdot \log (1/\eta))}$,
that for every $1 \leq i\leq t$, $m$, and $\Delta^{(i,m)} \in \calD_{i,m}$ associated with a set $C_i$,
\begin{align*}
\expectation_{j \in Q} \bracks[\big]{\Delta^{(i,m)}_j} \approx_{0, b_i} \expectation_{j \in [N]}\bracks[\big]{\Delta^{(i,m)}_j} \mbox{ ~~for~~ }
b_i = O \Big (\frac{\eps}{r} \cdot 2^{-i} \cdot \frac{|C_i|}{N} + \frac{\eta}{rt} \Big ),
\end{align*}
in which case we also have
\begin{equation}\label{eq:deltaapproxfinal}
\expectation_{j \in Q} \bracks[\Big]{\sum_{i=1}^{t}{\sum_{m=i}^{i+r}{\Delta^{(i,m)}_j}}} \approx_{0, b} \expectation_{j \in [N]}\bracks[\Big]{\sum_{i=1}^{t}{\sum_{m=i}^{i+r}{\Delta^{(i,m)}_j}}} \mbox{ ~~for~~ }b = O \Big ( \eps \cdot \sum_{i=1}^{t}{2^{-i} \cdot \frac{|C_i|}{N}} + \eta \Big ) \; .
\end{equation}

We show that a $Q$ with the above property satisfies the requirement of the theorem. Let $x \in \C^N$ be a vector, and assume without loss of generality that $\|x\|_1 = 1$.
By Lemma~\ref{lemma:exists_him}, there exist vector collections $(\Delta^{(i,m)} \in \calD_{i,m})_{m=i,\ldots,i+r}$ associated with sets $C_i$ ($1 \leq i \leq t$), satisfying Items 1, 2, and 3 there.
Combined with~\eqref{eq:deltaapproxfinal}, this gives
\[
\expectation_{j \in Q} \bracks[\big]{|(Mx)_j|^2}
\approx_{O(\eps), O(\eta)}
\expectation_{j \in [N]}\bracks[\big]{|(Mx)_j|^2} \; ,\]
and we are done.
\end{proof}

\subsection{The Restricted Isometry Property}
It is easy to derive now the following theorem. The proof is essentially identical to that of Theorem~\ref{thm:RIP_simple}, using Theorem~\ref{thm:Main_s_} instead of Theorem~\ref{thm:Main_s}.

\begin{theorem}
For sufficiently large $N$ and $k$, a unitary matrix $M \in \C^{N \times N}$
satisfying $\|M\|_\infty \leq O(1/\sqrt{N})$, and a sufficiently small
$\eps > 0$, the following holds.
For some $q = O(\log^2(1/\eps) \eps^{-2}  \cdot k \cdot \log^2(k/\eps) \cdot \log N)$,
let $A \in \C^{q \times N}$ be a matrix whose $q$ rows are chosen
uniformly and independently from the rows of $M$,
multiplied by $\sqrt{N/q}$.
Then, with probability
$1-2^{-\Omega(\log N \cdot \log (k/\eps))}$, the matrix $A$ satisfies the restricted isometry property of order $k$ with constant $\eps$.
\end{theorem}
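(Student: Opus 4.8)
The final theorem is the restricted isometry property statement obtained from the improved estimate in Theorem~\ref{thm:Main_s_}, and the paper itself tells us that ``the proof is essentially identical to that of Theorem~\ref{thm:RIP_simple}.'' So the plan is to mimic the proof of Theorem~\ref{thm:RIP_simple} step by step, substituting Theorem~\ref{thm:Main_s_} for Theorem~\ref{thm:Main_s} and tracking how the parameters change. First, I would let $Q$ be a multiset of $q$ uniform and independent random elements of $[N]$, and let $A \in \C^{q\times N}$ be the matrix whose rows are the corresponding rows of $M$ scaled by $\sqrt{N/q}$, so that $\|Ax\|_2^2 = N\cdot \expectation_{j\in Q}[|(Mx)_j|^2]$ for every $x$.

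\textbf{Reduction to $\ell_1$-bounded vectors.} As in the simple case, by Cauchy--Schwarz every $k$-sparse vector $x$ with $\|x\|_2 = 1$ satisfies $\|x\|_1 \le \sqrt{k}$, hence $\|x\|_1^2 \le k$. Since also $M$ is unitary, $\|Mx\|_2^2 = \|x\|_2^2 = 1$, i.e. $\expectation_{j\in[N]}[|(Mx)_j|^2] = 1/N$. I would therefore apply Theorem~\ref{thm:Main_s_} with error parameter $\eps/2$ in place of $\eps$ and with $\eta = \Omega(\eps/k)$ chosen so that the additive term $\eta\cdot\|x\|_1^2\cdot\|M\|_\infty^2 \le \eta \cdot k \cdot O(1/N)$ is at most $\eps/(2N)$. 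With $\|M\|_\infty = O(1/\sqrt N)$ and $\eta = \eps/k$ up to constants this works. The resulting number of rows is, plugging these choices into the bound $q = O(\log^2(1/\eps)\cdot\eps^{-1}\eta^{-1}\log N\cdot\log^2(1/\eta))$ of Theorem~\ref{thm:Main_s_},
\[
q = O\!\left(\log^2(1/\eps)\cdot \eps^{-1}\cdot (k/\eps)\cdot \log N\cdot \log^2(k/\eps)\right) = O\!\left(\log^2(1/\eps)\,\eps^{-2}\cdot k\cdot \log^2(k/\eps)\cdot \log N\right),
\]
matching the statement, and the failure probability is $2^{-\Omega(\log N\cdot\log(1/\eta))} = 2^{-\Omega(\log N\cdot\log(k/\eps))}$, also as claimed.

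\textbf{From the approximation to the RIP.} Conditioned on the (high-probability) event that the conclusion of Theorem~\ref{thm:Main_s_} holds for this $Q$, I would argue exactly as in Theorem~\ref{thm:RIP_simple}: for every $x$ with $\|x\|_2 = 1$,
\[
\|Ax\|_2^2 = N\cdot \expectation_{j\in Q}\bracks[\big]{|(Mx)_j|^2} \approx_{\eps/2,\,\eps/2} N\cdot \expectation_{j\in[N]}\bracks[\big]{|(Mx)_j|^2} = \|Mx\|_2^2 = 1,
\]
so $\|Ax\|_2^2 \in [1-\eps, 1+\eps]$, and by homogeneity (both sides scale quadratically in $\|x\|_2$) every $k$-sparse $x$ satisfies $\|Ax\|_2^2 \approx_{\eps,0}\|x\|_2^2$, i.e. $A$ has the restricted isometry property of order $k$ with constant $\eps$. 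I don't anticipate a genuine obstacle here; the only thing to be careful about is the bookkeeping of constants in the choice of $\eta$ so that the additive slack coming from Theorem~\ref{thm:Main_s_} is absorbed into an $O(\eps)$ multiplicative error, and absorbing the ``$\eps/2$ then $O(\eps)$'' into a clean ``$\eps$'' by applying the earlier theorem with a sufficiently small constant multiple of $\eps$. The mild subtlety relative to the simple case is that one should double check the assumption $\eps \ge \eta$ required by Theorem~\ref{thm:Main_s_}: since $\eta = \Theta(\eps/k)$ and $k\ge 1$, this holds for all sufficiently large $k$, which is already assumed.
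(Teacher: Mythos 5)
Your proposal is correct and follows the paper's own route: it reproduces the proof of Theorem~\ref{thm:RIP_simple} verbatim, substituting Theorem~\ref{thm:Main_s_} with $\eps/2$ and $\eta = \Theta(\eps/k)$, which is exactly what the paper intends when it says the proof is essentially identical. The parameter bookkeeping ($q$, the failure probability, and absorbing the additive $\eta k \|M\|_\infty^2$ term into $\eps/2$ after scaling by $N$) all checks out; note only that the condition $\eps \ge \eta$ is handled inside the proof of Theorem~\ref{thm:Main_s_} itself, so you need not verify it here.
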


\bibliographystyle{abbrv}
\bibliography{rip}

\end{document}